%% file: main.tex
\documentclass[12pt,letterpaper,onecolumn,draftclsnofoot]{IEEEtran}

\usepackage[letterpaper,margin=1in]{geometry}
\input{macros}

\title{Lower bounds for Bayesian CVaR in interactive problems with local differential privacy}
\author{%
Raghav Bongole, Amirreza Zamani, Tobias J. Oechtering,
and Mikael Skoglund%
\thanks{R. Bongole, A. Zamani, T. J. Oechtering, and M. Skoglund
are with the Department of Information Science and Engineering,
School of Electrical Engineering and Computer Science,
KTH Royal Institute of Technology,
Sweden
(e-mail: \{bongole,amizam,oech,skoglund\}@kth.se).}%
\thanks{Preliminary versions of parts of this work appeared in
\cite{bongole2026generalized,bongole2026instantiating}.}%
}

\begin{document}
\maketitle

\begin{abstract}
We develop an information-theoretic method for lower bounding the prior-predictive conditional value-at-risk (CVaR) of loss in interactive statistical decision making (ISDM). A generalized Fano inequality replaces a single error event by a randomized binary statistic constructed from a bounded transform of the loss, reducing the problem to a Bernoulli $f$-divergence inequality. A bounded hinge transform yields a direct lower bound on Bayesian CVaR under a prior fixed in advance. We derive  reverse-KL specializations and obtain explicit bounds for Gaussian mean estimation and Gaussian stochastic bandits. We then impose local differential privacy (LDP) at the observation stage of the same ISDM model. LDP contraction inequalities combined with the adaptive KL chain rule give explicit Bayesian CVaR lower bounds for Gaussian mean estimation with heterogeneous privacy parameters and Gaussian $K$-armed bandits. 
\end{abstract}

\begin{IEEEkeywords}
Fano inequality, interactive statistical decision making, Bayesian risk, CVaR, $f$-divergence, local differential privacy, bandits.
\end{IEEEkeywords}

\input{sections/01_introduction}
\input{sections/02_isdm}

\input{sections/03_generalized_fano}

\input{sections/04_bayesian_cvar}
\input{sections/05_method}
\input{sections/06_gaussian}

\input{sections/07_ldp_isdm}
\input{sections/08_ldp_applications}
\input{sections/09_related_work}
\input{sections/10_discussion}
\input{sections/11_conclusion}

\bibliographystyle{IEEEtran}
\bibliography{references}

\newpage
\appendices
\input{appendices/A_generalized_fano}
\input{appendices/B_divergence}
\input{appendices/C_gaussian}
\input{appendices/D_ldp}

\end{document}

%% file: macros.tex
\usepackage{amsmath,amssymb,amsthm,mathtools,bm,bbm,microtype,hyperref,cleveref,enumitem,booktabs,tabularx,array,float}
\hypersetup{hidelinks,hypertexnames=false}
\newcommand{\Alg}{\mathsf{ALG}}
\newcommand{\Dcal}{\mathcal{D}}
\newcommand{\Mcal}{\mathcal{M}}

\newcommand{\Ycal}{\mathcal{Y}}
\newcommand{\Zcal}{\mathcal{Z}}
\newcommand{\Acal}{\mathcal{A}}
\newcommand{\Hcal}{\mathcal{H}}
\newcommand{\epsvec}{\boldsymbol{\varepsilon}}
\newcommand{\CVaR}{\operatorname{CVaR}}
\newcommand{\VaR}{\operatorname{VaR}}
\newcommand{\KL}{D_{\mathrm{KL}}}
\newcommand{\TV}{d_{\mathrm{TV}}}
\newcommand{\Bern}{\operatorname{Bern}}
\newcommand{\Law}{\operatorname{Law}}
\newcommand{\R}{\mathbb{R}}
\newcommand{\E}{\mathbb{E}}

\newcommand{\1}{\mathbbm{1}}
\newcommand{\defeq}{\mathrel{:=}}
\newcommand{\Cval}{\mathfrak C}

\newtheorem{theorem}{Theorem}[section]
\newtheorem{proposition}[theorem]{Proposition}
\newtheorem{lemma}[theorem]{Lemma}
\newtheorem{corollary}[theorem]{Corollary}
\newtheorem{definition}[theorem]{Definition}
\newtheorem{remark}[theorem]{Remark}

%% file: sections/01_introduction.tex
\section{Introduction}
\label{sec:introduction}

Interactive statistical decision making (ISDM) provides a common
framework for problems in which a learner's actions influence the
information it subsequently observes. A model specifies the
environment, an algorithm selects actions using the available history,
and their interaction induces a distribution over the resulting
transcript. Passive estimation is recovered when the learner has no
control over the observation process, whereas multi-armed bandits,
contextual bandits, and reinforcement learning are prominent adaptive
examples
\cite{foster2021statistical,foster2023tight,chen2024assouad}.

Information-theoretic lower bounds characterize the performance that
no admissible procedure can surpass. Classical methods based on
Le Cam's, Assouad's, and Fano's inequalities relate statistical
difficulty to the distinguishability of candidate models. In
interactive problems, this distinguishability depends on the
algorithm, because its actions determine which observations are
collected. Interactive extensions of classical lower-bound methods
account for this dependence
\cite{chen2024assouad}, while the Decision--Estimation Coefficient
provides a complementary characterization of the complexity of broad
classes of interactive decision problems
\cite{foster2021statistical,foster2023tight}.

Most information-theoretic lower bounds focus on expected loss or on
the probability that a loss exceeds a prescribed threshold.
Interactive Fano arguments, for example, typically reduce the loss to
a binary success or failure event and then control the corresponding
Bernoulli divergence
\cite{chen2024assouad,bongole2026quantile}. This reduction is well
suited to expected-risk and quantile criteria, but it does not directly
describe more general bounded functionals of the loss.

The distinction is particularly relevant for conditional
value-at-risk (CVaR), which measures the severity of losses in the
upper tail rather than only their expectation or a single quantile.
Information-theoretic lower bounds for Bayesian expected risk are
well developed
\cite{esposito2024bayesianrisk}, and recent work has established
high-probability and minimax-quantile lower bounds in both passive and
interactive settings
\cite{ma2024highprobability,chen2024assouad,
bongole2026quantile,esposito2026minimax}. These results concern risk
criteria that differ from the prior-predictive Bayesian CVaR studied
here. In our formulation, a model is first drawn from a prescribed
prior, the transcript is then generated by its interaction with the
algorithm, and CVaR is evaluated under the resulting joint
prior-predictive law. The prior is fixed before optimizing over
admissible algorithms and therefore forms part of the decision
problem.

This paper develops a general route to lower bounds for this
criterion. The key step replaces the usual indicator of a threshold
event by a randomized binary statistic representing an arbitrary
bounded transform of the loss. Data processing then relates the
expected transformed loss under the model-dependent experiment to its
counterpart under a reference experiment through a Bernoulli
$f$-divergence constraint. Indicator functions recover the usual
threshold-event reduction, while other transforms permit different
loss functionals to be treated within the same argument.

For bounded losses, we apply this reduction to a hinge transform and
combine it with the Rockafellar--Uryasev representation of CVaR
\cite{rockafellar2002conditional}. The resulting converse reduces a
problem-specific CVaR lower bound to two tasks: controlling the
excess loss under a suitable reference distribution and bounding the
average divergence between the model-dependent and reference
transcript laws. A reverse-KL specialization produces an exponential
Bernoulli inverse. Balancing this inverse against the CVaR tail level
leads to the logarithmic dependence on the reciprocal tail probability
that appears in our explicit lower bounds.

We first apply the method without privacy constraints. For Gaussian
mean estimation, we use a symmetric two-point prior and a clipped
squared-error proxy to lower bound the prior-predictive CVaR of squared
estimation error. For Gaussian $K$-armed stochastic bandits, we use a
uniform prior over environments with different optimal arms and lower
bound the CVaR of cumulative pseudo-regret. In each case, the
fixed-prior inequality is established for every value of the model
separation. A separation is selected afterward to identify a
hard prior from the corresponding family. Thus, for each selected
prior, the optimization over algorithms is a fixed-prior
Bayesian decision problem.

We then incorporate local differential privacy (LDP). Statistical
estimation under LDP has an extensive information-theoretic literature
\cite{6686179,duchi2018minimax,duchi2019lower}, and interactive
decision making under LDP has also been studied through private
variants of the Decision--Estimation Coefficient and related
high-probability criteria
\cite{chen2025changing}. Heterogeneous privacy levels have recently
been considered for high-probability estimation
\cite{aliakbarpour2025highprobability}. Our objective is different:
we seek fixed-prior, prior-predictive CVaR lower bounds when the learner
acts using locally privatized observations.

The general CVaR argument remains unchanged in the private setting and
is applied directly to the released transcript. Privacy affects the
analysis through the available divergence budget. We conditionally
apply an LDP contraction inequality at each round and combine the
result with the adaptive KL chain rule. This yields transcript-level
divergence bounds for sequentially private procedures, which can then
be inserted into the same Bayesian CVaR converse used in the
nonprivate setting.

We instantiate this argument for Gaussian mean estimation with
heterogeneous privacy parameters and for Gaussian $K$-armed bandits
under sequential LDP. We also provide a bounded mean-estimation
example under heterogeneous LDP to compare the parameter dependence
of our Bayesian CVaR lower bound with known high-probability minimax
results for the same bounded observation model
\cite{aliakbarpour2025highprobability}. The comparison concerns the
dependence on the tail and privacy parameters and the Bayesian CVaR and
minimax-quantile criteria remain distinct.

\subsection{Contributions}
\label{subsec:contributions}

The main contributions are as follows.

\begin{enumerate}[leftmargin=*,itemsep=4pt]

\item
\textbf{A bounded-transform extension of interactive Fano arguments.}
We extend the interactive Fano method of
\cite{chen2024assouad} from threshold indicators to arbitrary bounded
measurable transforms of the loss. A randomized binary representation
and data processing yield a two-sided Bernoulli $f$-divergence
constraint on the expected transform. This allows a broader class of
loss functionals to be controlled directly, without first reducing the
loss to a single threshold event. The threshold-event reduction of \cite{chen2024assouad}
is recovered as a special case, while other choices, such as the hinge
transform, provide direct access to tail-sensitive criteria including
CVaR.

\item
\textbf{A fixed-prior Bayesian CVaR converse.}
For bounded losses, a hinge transform and the
Rockafellar--Uryasev representation convert the general
$f$-divergence constraint into a lower bound on prior-predictive
Bayesian CVaR. We develop an explicit reverse-KL specialization that
retains the dependence on the CVaR level.

\item
\textbf{Explicit Gaussian estimation and bandit lower bounds.}
We apply the converse to Gaussian mean estimation under squared error
and to Gaussian $K$-armed bandits under cumulative pseudo-regret.
The results first hold for every fixed prior in a parameterized hard
family. Selecting a separation afterward gives hard-prior lower bounds
with logarithmic dependence on the CVaR tail probability.

\item
\textbf{Extension to locally private observations.}
We formulate the released-transcript version of ISDM under sequential
LDP and combine one-round LDP contraction with the adaptive KL chain
rule. This gives Bayesian CVaR lower bounds for Gaussian mean
estimation with heterogeneous privacy levels, Gaussian stochastic
bandits under LDP, and a bounded mean-estimation model used for
comparison with existing heterogeneous-LDP results.

\end{enumerate}

Section~\ref{sec:generalized-fano} presents the bounded-transform
divergence inequality, and Section~\ref{sec:bayesian-cvar} derives its
Bayesian CVaR consequence. Section~\ref{sec:gaussian} treats the
Gaussian applications. Sections~\ref{sec:private-isdm}
and~\ref{sec:private-applications} develop the LDP formulation and its
applications. Related work and limitations are discussed in
Sections~\ref{sec:related-work} and~\ref{sec:discussion}, respectively,
and detailed proofs are deferred to the appendices.

Preliminary versions of parts of this work appeared in
\cite{bongole2026generalized,bongole2026instantiating}. The present
article combines the bounded-transform and Bayesian CVaR framework
with reverse-KL Gaussian applications and an extension to sequential
local differential privacy.

%% file: sections/02_isdm.tex
\section{Interactive Statistical Decision Making and Bayesian Risk}
\label{sec:isdm}
This section introduces the statistical framework and the Bayesian risk
criterion used throughout the paper. We first fix the notation, then
describe the interactive experiment and its induced transcript
distribution, and finally define the Bayesian CVaR criterion under a
prior fixed in advance.
\subsection{Notation}

We use capital letters for random variables and lower-case letters for their realizations. For probability measures $P$ and $Q$ on a common measurable space, total variation is denoted by $\TV(P,Q)$ and Kullback--Leibler divergence by $\KL(P\|Q)$. 
For probability measures $P$ and $Q$ on a common measurable space and
a convex function $f:\R_+\to\R\cup\{+\infty\}$ satisfying $f(1)=0$,
the $f$-divergence is
\begin{equation}
    D_f(P\|Q)
    \defeq
    \int q\,f\!\left(\frac{p}{q}\right)d\lambda,
\end{equation}
where $\lambda$ is any measure dominating $P$ and $Q$, with
$p=dP/d\lambda$ and $q=dQ/d\lambda$ \cite{polyanskiy2025information}. This definition is independent of the choice of
$\lambda$.

We write
\begin{equation}
    [x]_+\defeq\max\{x,0\}.
\end{equation}
The Bernoulli distribution with mean $a$ is denoted by $\Bern(a)$.

For a nonnegative random variable $L$ and $\alpha\in(0,1)$, the CVaR is written as
\begin{equation}
    \CVaR_\alpha(L)
    \defeq
    \inf_{s\in\R}
    \left\{
        s+
        \frac{1}{1-\alpha}\E[(L-s)_+]
    \right\}.
    \label{eq:isdm-cvar}
\end{equation}

\subsection{Interactive statistical decision making}

Let $\Mcal$ be a model class, $\Acal$ an action space, and $\Ycal$ an observation space. An interactive procedure $\Alg\in\Dcal$ operates for $T$ rounds. Given the history
\begin{equation}
    H_{t-1}=(A_1,Y_1,\ldots,A_{t-1},Y_{t-1}),
\end{equation}
it selects
\begin{equation}
    A_t\sim\pi_t(\cdot\mid H_{t-1}).
\end{equation}
Under a model $M\in\Mcal$, the environment then generates
\begin{equation}
    Y_t\sim P_t^M(\cdot\mid A_t,H_{t-1}).
\end{equation}
The transcript is
\begin{equation}
    H_T=(A_1,Y_1,\ldots,A_T,Y_T),
\end{equation}
and its induced distribution is denoted by
\begin{equation}
    \mathbb P^{M,\Alg}.
\end{equation}

The defining feature of ISDM is that the transcript distribution depends on the procedure. In bandit problems, the action determines which observation distribution is sampled. The same formulation subsumes both passive and adaptive statistical problems.

\subsection{Fixed-prior Bayesian decision problems}

Fix a prior
\begin{equation}
    \mu\in\mathcal P(\Mcal).
\end{equation}
 For a fixed procedure $\Alg$, the Bayesian experiment is
\begin{equation}
    M\sim\mu,
    \qquad
    H_T\mid M\sim\mathbb P^{M,\Alg}.
    \label{eq:bayesian-experiment}
\end{equation}
Equivalently, the joint law of the model and transcript is
\begin{equation}
    \mathbb P_{\mu,\Alg}(dM,dh)
    \defeq
    \mu(dM)\mathbb P^{M,\Alg}(dh).
    \label{eq:joint-law}
\end{equation}

Let
\begin{equation}
    L:\Mcal\times\Hcal_T\to[0,\infty)
\end{equation}
be a measurable loss. The fixed-prior Bayesian CVaR value is
\begin{equation}
    \CVaR_\alpha^{\mu,\Alg}(L)
    \defeq
    \inf_{s\in\R}
    \left\{
        s+\frac{1}{1-\alpha}
        \E_{\mathbb P_{\mu,\Alg}}[(L-s)_+]
    \right\}.
\end{equation}
For the fixed prior $\mu$, the optimal Bayesian CVaR is
\begin{equation}
    \Cval_\alpha(\mu)
    \defeq
    \inf_{\Alg\in\Dcal}
    \CVaR_\alpha^{\mu,\Alg}(L),
    \label{eq:fixed-prior-value}
\end{equation}
where the expectation in \eqref{eq:isdm-cvar} is taken with respect to \eqref{eq:joint-law}.

\begin{remark}[Fixed prior]
In \eqref{eq:fixed-prior-value}, $\mu$ is specified before the procedure is optimized. A lower bound for a particular prior is therefore a lower bound for a particular Bayesian decision problem. If a parameterized family of priors is optimized afterward, the resulting statement is that at least one prior in that family is hard.
\end{remark}

%% file: sections/03_generalized_fano.tex
\section{Generalized Interactive Fano Inequality}
\label{sec:generalized-fano}
This section develops the main information-theoretic inequality of the
paper. We associate a bounded function of the loss with a binary random
variable and use data processing for $f$-divergences to compare its law
under the model-dependent experiment with its law under a reference
experiment. The resulting inequality gives both lower and upper bounds
on the expected transformed loss through Bernoulli $f$-divergence
inverses.

Fix a prior $\mu$, a procedure $\Alg$, and a probability distribution $Q$ on the transcript space. Let $\phi:\R_+\to[0,1]$ be measurable and define
\begin{align}
    a_\phi
    &\defeq
    \E_{M\sim\mu}
    \E_{H_T\sim\mathbb P^{M,\Alg}}
    [\phi(L(M,H_T))],
    \\
    b_{\phi,Q}
    &\defeq
    \E_{M\sim\mu}
    \E_{H_T\sim Q}
    [\phi(L(M,H_T))].
\end{align}
For an $f$-divergence, define the average divergence
\begin{equation}
    B_f(\Alg,Q)
    \defeq
    \E_{M\sim\mu}
    D_f(\mathbb P^{M,\Alg}\|Q).
    \label{eq:section-Bf}
\end{equation}

For $b\in[0,1]$ and $B\ge0$, define the lower and upper Bernoulli
$f$-divergence inverses by
\begin{align}
    a_f^-(B;b)
    &:=
    \inf\left\{
        a\in[0,1]:
        D_f\!\left(
            \Bern(a)\middle\|\Bern(b)
        \right)
        \le B
    \right\},
    \label{eq:bernoulli-lower-inverse}
    \\
    a_f^+(B;b)
    &:=
    \sup\left\{
        a\in[0,1]:
        D_f\!\left(
            \Bern(a)\middle\|\Bern(b)
        \right)
        \le B
    \right\}.
    \label{eq:bernoulli-upper-inverse}
\end{align}

The following theorem is the main result of this section. It shows that
the expected transformed loss under the model-dependent experiment and
under the reference experiment must satisfy a Bernoulli
$f$-divergence constraint. Inverting this constraint gives a two-sided
bound on the expected transformed loss.

\begin{theorem}[Generalized interactive Fano inequality]
\label{thm:generalized-fano}
Fix a prior $\mu$, a procedure $\Alg$, a reference distribution $Q$,
a measurable function $\phi:\R_+\to[0,1]$, and an $f$-divergence.
Let
\begin{align}
    a_\phi
    &:=
    \E_{M\sim\mu}
    \E_{H_T\sim\mathbb P^{M,\Alg}}
    \left[
        \phi(L(M,H_T))
    \right],
    \\
    b_{\phi,Q}
    &:=
    \E_{M\sim\mu}
    \E_{H_T\sim Q}
    \left[
        \phi(L(M,H_T))
    \right],
\end{align}
and
\begin{equation}
    B_f(\Alg,Q)
    :=
    \E_{M\sim\mu}
    D_f\!\left(
        \mathbb P^{M,\Alg}\middle\|Q
    \right).
\end{equation}
Then
\begin{equation}
    D_f\!\left(
        \Bern(a_\phi)
        \middle\|
        \Bern(b_{\phi,Q})
    \right)
    \le
    B_f(\Alg,Q).
    \label{eq:generalized-fano}
\end{equation}
Consequently,
\begin{equation}
    a_f^-\!\left(
        B_f(\Alg,Q);b_{\phi,Q}
    \right)
    \le
    a_\phi
    \le
    a_f^+\!\left(
        B_f(\Alg,Q);b_{\phi,Q}
    \right).
    \label{eq:generalized-fano-two-sided}
\end{equation}
\end{theorem}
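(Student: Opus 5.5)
The plan is to build a joint experiment on (model, transcript, auxiliary uniform), push it through a binary channel, and use joint convexity of $f$-divergences together with data processing. Concretely, I introduce an auxiliary variable $U\sim\mathrm{Unif}[0,1]$ independent of everything else, and define the randomized binary statistic
\begin{equation}
    Z \defeq \1\{U\le \phi(L(M,H_T))\}.
\end{equation}
Conditionally on $(M,H_T)$, $Z$ is $\Bern(\phi(L(M,H_T)))$, since $\phi$ takes values in $[0,1]$. Under the joint law $\mu(dM)\,\mathbb P^{M,\Alg}(dh)\,dU$, the tower property gives $\Pr(Z=1)=a_\phi$. Under the reference joint law $\mu(dM)\,Q(dh)\,dU$, it gives $\Pr(Z=1)=b_{\phi,Q}$.

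The main step is to compare the two joint laws $P_{\mathrm{joint}}=\mu\otimes\mathbb P^{M,\Alg}\otimes\mathrm{Unif}$ and $Q_{\mathrm{joint}}=\mu\otimes Q\otimes\mathrm{Unif}$. They share the $M$-marginal and the independent $U$ component. By the chain rule for $f$-divergences with a common first marginal, or directly from the density-ratio form of the definition, we get
\begin{equation}
    D_f(P_{\mathrm{joint}}\|Q_{\mathrm{joint}})
    = \E_{M\sim\mu} D_f(\mathbb P^{M,\Alg}\|Q)
    = B_f(\Alg,Q).
\end{equation}
For the direct computation, the Radon--Nikodym derivative of $P_{\mathrm{joint}}$ with respect to $Q_{\mathrm{joint}}$ at $(m,h,u)$ is $\frac{d\mathbb P^{m,\Alg}}{dQ}(h)$. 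When $\mathbb P^{m,\Alg}\not\ll Q$, I will instead use a dominating measure $\lambda$ on transcripts and work with densities, with the convention for $q=0$ built into the definition. Integrating in $u$ and then in $h$ then yields the average divergence exactly.

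Next I apply the data processing inequality for $f$-divergences to the measurable map $(M,H_T,U)\mapsto Z$. This gives $D_f(\Bern(a_\phi)\|\Bern(b_{\phi,Q}))\le B_f(\Alg,Q)$, which is \eqref{eq:generalized-fano}. The two-sided consequence \eqref{eq:generalized-fano-two-sided} is then immediate from definitions \eqref{eq:bernoulli-lower-inverse}--\eqref{eq:bernoulli-upper-inverse}. Indeed, $a=a_\phi$ belongs to the set $\{a\in[0,1]: D_f(\Bern(a)\|\Bern(b_{\phi,Q}))\le B_f\}$, so it lies between the infimum and supremum of that set. The case $B_f=+\infty$ is trivial.

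The only step needing care is the identity for the joint divergence when absolute continuity fails, and measurability of $m\mapsto D_f(\mathbb P^{m,\Alg}\|Q)$. I expect this to be the main obstacle. I would handle it by assuming $\mathbb P^{M,\Alg}$ is a probability kernel, so that the joint density $p(m,h)$ with respect to $\mu\otimes\lambda$ is jointly measurable. If needed I can settle for the inequality $\le$, which is all we use. That inequality follows from convexity: the joint $f$-divergence with common marginal $\mu$ equals $\int\mu(dm)\,D_f$ of the conditionals, by Fubini applied to $\int q\,f(p/q)$.
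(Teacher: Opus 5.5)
Your proposal is correct and follows essentially the same route as the paper: an independent $U\sim\operatorname{Unif}[0,1]$ and the statistic $\1\{U\le\phi(L(M,H_T))\}$, identification of the two Bernoulli laws, data processing on the augmented product measures, the common-$\mu$-marginal decomposition giving exactly $B_f(\Alg,Q)$, and inversion by membership in the feasible set. Your extra attention to absolute continuity and kernel measurability goes slightly beyond the paper, which simply asserts the decomposition; that decomposition is an equality by Tonelli, so no appeal to convexity is needed.
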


\begin{remark}[Choice of inversion direction]
\label{rem:inverse-direction}
Theorem~\ref{thm:generalized-fano} gives a two-sided constraint on the
expected transform $a_\phi$. The endpoint used in a particular
application depends on the direction of the desired converse.

For the CVaR lower bound developed below, we use the nondecreasing
hinge transform
\begin{equation}
    \phi_s(\ell)
    =
    \frac{(\ell-s)_+}{L_{\max}},
\end{equation}
and therefore require a lower bound on
$\E[\phi_s(L)]$. Hence only the lower endpoint
$a_f^-$ enters the CVaR converse.

The upper endpoint $a_f^+$ is retained in the general theorem because
it is useful for transforms for which an upper bound on the expected
transform yields the desired lower bound, such as
success-probability or complementary tails.
\end{remark}

\begin{remark}[Binary reduction]
The proof introduces an independent $U\sim\operatorname{Unif}[0,1]$ and sets
\begin{equation}
    V=\1\{U\le\phi(L(M,H_T))\}.
\end{equation}
Under the model-dependent joint law, $V\sim\Bern(a_\phi)$; under the reference joint law, $V\sim\Bern(b_{\phi,Q})$. Data processing gives \eqref{eq:generalized-fano}.
\end{remark}

\begin{remark}[Relation to quantile reductions]
Choosing $\phi$ as the indicator of a loss event recovers the 
single-threshold reduction used in quantile lower bounds \cite{chen2024assouad}. More
generally, Theorem~\ref{thm:generalized-fano} allows any measurable
function $\phi:\R_+\to[0,1]$. This additional flexibility permits
bounded transforms of the loss, including the hinge transform used
below to obtain the CVaR lower bound.
\end{remark}

%% file: sections/04_bayesian_cvar.tex
\section{Bayesian CVaR Lower Bounds}
\label{sec:bayesian-cvar}

This section specializes the generalized Fano inequality to fixed-prior Bayesian CVaR. For a prior $\mu$ and a procedure $\Alg$, let
\[
M\sim\mu,
\qquad
H_T\mid M\sim P^{M,\Alg}.
\]
The prior-predictive Bayesian CVaR of the loss $L(M,H_T)$ is
\begin{equation}
\CVaR_{\alpha}^{\mu,\Alg}(L)
\defeq
\inf_{s\in\mathbb{R}}
\left\{
s+\frac{1}{1-\alpha}
\E_{M\sim\mu}
\E_{H_T\sim P^{M,\Alg}}
\left[(L(M,H_T)-s)_+\right]
\right\}.
\label{eq:bayesian-cvar}
\end{equation}
Assume that $0\le L\le L_{\max}$. The infimum in \eqref{eq:bayesian-cvar} may then be restricted to $s\in[0,L_{\max}]$. For such $s$, define the normalized hinge transform
\begin{equation}
\phi_s(\ell)
\defeq
\frac{(\ell-s)_+}{L_{\max}},
\label{eq:hinge-transform}
\end{equation}
so that $\phi_s(\ell)\in[0,1]$. For a reference distribution $Q$ on the transcript space, independent of $M\sim\mu$, define
\begin{equation}
b_s(Q)
\defeq
\frac{1}{L_{\max}}
\E_{M\sim\mu}
\E_{H_T\sim Q}
\left[(L(M,H_T)-s)_+\right].
\label{eq:reference-excess-loss}
\end{equation}
Applying Theorem~\ref{thm:generalized-fano} to $\phi_s$ yields a general lower bound on Bayesian CVaR in terms of the reference excess loss $b_s(Q)$ and the prior-averaged divergence $B_f(\Alg,Q)$.

\begin{corollary}[Fixed-prior Bayesian CVaR lower bound]
\label{cor:general-cvar}
For every reference distribution $Q$ and every $\Alg\in\Dcal$,
\begin{equation}
    \CVaR_\alpha^{\mu,\Alg}(L)
    \ge
    \inf_{0\le s\le L_{\max}}
    \left\{
        s+
        \frac{L_{\max}}{1-\alpha}
        a_f^-
        \left(
            B_f(\Alg,Q);b_s(Q)
        \right)
    \right\}.
    \label{eq:general-cvar}
\end{equation}
The same bound therefore holds after taking the infimum over any
subclass of admissible procedures, including procedures satisfying
local differential privacy.
\end{corollary}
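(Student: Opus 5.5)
The plan is to combine the Rockafellar--Uryasev variational form of the Bayesian CVaR with Theorem~\ref{thm:generalized-fano} applied to the normalized hinge transform $\phi_s$, one threshold $s$ at a time.

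\emph{Step 1: restrict the threshold to $[0,L_{\max}]$.} Since $0\le L\le L_{\max}$, the objective $g(s)\defeq s+\frac{1}{1-\alpha}\E_{\mathbb P_{\mu,\Alg}}[(L-s)_+]$ is handled on two ranges.
\begin{itemize}
\item For $s<0$ we have $(L-s)_+=L-s$, so
\[
g(s)=s\bigl(1-\tfrac{1}{1-\alpha}\bigr)+\tfrac{\E L}{1-\alpha}.
\]
This is nonincreasing in $s$ on $(-\infty,0]$ because $1-\frac{1}{1-\alpha}<0$. Hence $g(s)\ge g(0)$.
\item For $s>L_{\max}$ the hinge term vanishes, so $g(s)=s>L_{\max}=g(L_{\max})$.
\end{itemize}
Therefore $\CVaR_\alpha^{\mu,\Alg}(L)=\inf_{0\le s\le L_{\max}} g(s)$.

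\emph{Step 2: bound the hinge term pointwise in $s$.} Fix $s\in[0,L_{\max}]$. Then $\phi_s$ is measurable and maps $\R_+$ into $[0,1]$, because $(\ell-s)_+\le L_{\max}$ whenever $\ell\le L_{\max}$. (If one insists on $\phi_s$ being defined on all of $\R_+$, clip it at $1$; this changes nothing on the range of $L$.)

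With this choice, $a_{\phi_s}=\E_{\mathbb P_{\mu,\Alg}}[(L-s)_+]/L_{\max}$ and $b_{\phi_s,Q}=b_s(Q)$. The lower half of \eqref{eq:generalized-fano-two-sided} gives
\[
\E[(L-s)_+]\ge L_{\max}\,a_f^-\bigl(B_f(\Alg,Q);b_s(Q)\bigr).
\]
It follows that $g(s)\ge s+\frac{L_{\max}}{1-\alpha}a_f^-(B_f(\Alg,Q);b_s(Q))$ for every $s\in[0,L_{\max}]$.

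\emph{Step 3: take the infimum.} Taking the infimum over $s\in[0,L_{\max}]$ on both sides of the bound from Step~2, and using Step~1, yields \eqref{eq:general-cvar}.

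\emph{Step 4: the subclass claim.} For any subclass $\Dcal'\subseteq\Dcal$ (for instance, LDP procedures), the bound holds for each $\Alg\in\Dcal'$, so it survives $\inf_{\Alg\in\Dcal'}$. Two points need care here:
\begin{itemize}
\item The right-hand side depends on $\Alg$ only through $B_f(\Alg,Q)$. When passing to the infimum, one should either state the bound per procedure or replace $B_f(\Alg,Q)$ by $\sup_{\Alg\in\Dcal'}B_f(\Alg,Q)$.
\item That replacement is valid because $a_f^-(\cdot\,;b)$ is nonincreasing in $B$: its defining feasible set grows as $B$ grows.
\end{itemize}

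\emph{Main obstacle.} The only delicate point is making sure Theorem~\ref{thm:generalized-fano} genuinely applies to $\phi_s$ and to a $Q$ that does not depend on $M$. This requires checking that $b_s(Q)$ is computed under the product law $\mu\otimes Q$, which matches the definition of $b_{\phi,Q}$. No monotonicity or continuity of $a_f^-$ in $b$ is needed, because we bound pointwise in $s$ before taking the infimum.
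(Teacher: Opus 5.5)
Your proposal is correct and follows the paper's proof essentially step for step. You apply Theorem~\ref{thm:generalized-fano} to the hinge transform $\phi_s$ at each fixed $s$, use the lower endpoint $a_f^-$ to bound $\E[(L-s)_+]$ from below, and restrict the Rockafellar--Uryasev infimum to $[0,L_{\max}]$ by the same two-case argument; the paper simply does this restriction last rather than first. Your clipping remark on the domain of $\phi_s$, and your observation that $a_f^-(\cdot\,;b)$ is nonincreasing in $B$ (which gives a bound uniform over a subclass), are harmless refinements of points the paper leaves implicit.
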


\begin{remark}
To apply \eqref{eq:general-cvar}, it is enough to lower bound $b_s(Q)$
and upper bound $B_f(\Alg,Q)$. These two quantities are then inserted
into the lower Bernoulli inverse $a_f^-$. The reference distribution
$Q$ is chosen to make both calculations tractable.
\end{remark}

\subsection{Reverse-KL Specialization}

Take $f(u)=-\log u$. Then
$D_f(P\|Q)=D_{\mathrm{KL}}(Q\|P)$. Let
$h(b)=-b\log b-(1-b)\log(1-b)$ denote the binary entropy.

To obtain explicit bounds from Corollary~\ref{cor:general-cvar}, we
next lower bound the Bernoulli inverse for reverse KL. The following
lemma shows that an upper bound on
$D_{\mathrm{KL}}(\Bern(b)\|\Bern(a))$ yields an exponential lower
bound on $a$.
\begin{lemma}[Reverse-KL Bernoulli inverse]
\label{lem:reverse-kl}
Suppose $b\in(0,1]$ and
$D_{\mathrm{KL}}(\Bern(b)\|\Bern(a))\le B$. Then
\begin{equation}
    a
    \ge
    \exp\left(
        -\frac{B+h(b)}{b}
    \right).
    \label{eq:reverse-kl}
\end{equation}
Consequently, if $b\ge\beta>0$, then
\begin{equation}
    a_f^-(B;b)
    \ge
    \frac{1}{2^{1/\beta}}
    e^{-B/\beta}.
    \label{eq:reverse-kl-beta-general}
\end{equation}
In particular, for $\beta=1/4$,
\begin{equation}
    a_f^-(B;b)
    \ge
    \frac{1}{16}e^{-4B}.
    \label{eq:reverse-kl-quarter}
\end{equation}
\end{lemma}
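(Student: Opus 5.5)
The plan is to expand the Bernoulli KL divergence and discard a nonnegative term. Write
\[
D_{\mathrm{KL}}(\Bern(b)\|\Bern(a)) = b\log\frac{b}{a} + (1-b)\log\frac{1-b}{1-a}.
\]
Rearranging gives
\[
D_{\mathrm{KL}}(\Bern(b)\|\Bern(a)) = -h(b) - b\log a - (1-b)\log(1-a).
\]
Since $a\in[0,1]$, we have $-(1-b)\log(1-a)\ge 0$. Therefore
\[
B \ge D_{\mathrm{KL}}(\Bern(b)\|\Bern(a)) \ge -h(b) - b\log a,
\]
that is, $b\log(1/a)\le B+h(b)$. Dividing by $b>0$ and exponentiating yields \eqref{eq:reverse-kl}. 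The degenerate cases are harmless. If $a=0$ with $b>0$, the divergence is $+\infty$ and the hypothesis fails. If $a=1$, the conclusion is trivial. The boundary case $b=1$ uses the convention $0\log 0=0$, so $h(1)=0$.

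For \eqref{eq:reverse-kl-beta-general}, recall that with $f(u)=-\log u$ we have $D_f(\Bern(a)\|\Bern(b))=D_{\mathrm{KL}}(\Bern(b)\|\Bern(a))$. Hence every $a$ in the feasible set defining $a_f^-(B;b)$ satisfies the bound just derived. It then suffices to lower bound $\exp(-(B+h(b))/b)$ uniformly over $b\ge\beta$. We use $h(b)\le\log 2$ together with $1/b\le 1/\beta$. Since $B\ge 0$,
\[
\frac{B+h(b)}{b} \le \frac{B+\log 2}{\beta}.
\]
This gives
\[
a \ge \exp\!\left(-\frac{B}{\beta}\right) 2^{-1/\beta}.
\]
Taking the infimum over feasible $a$ preserves the bound. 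The bound also holds vacuously, by the convention $\inf\emptyset=+\infty$, if the feasible set is empty. It cannot be empty in practice, since $a=b$ is feasible.

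Substituting $\beta=1/4$ gives $2^{-4}e^{-4B}=\tfrac{1}{16}e^{-4B}$, which is \eqref{eq:reverse-kl-quarter}.

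The only point needing care is checking that the direction of the divergence, under $f(u)=-\log u$, matches the lemma's hypothesis. Once that is confirmed, the argument is a one-line expansion followed by monotone bounds.
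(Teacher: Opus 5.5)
Your proof is correct and follows essentially the same route as the paper: expand the Bernoulli KL divergence, drop the nonnegative term $-(1-b)\log(1-a)$, and then apply $h(b)\le\log 2$ and $1/b\le 1/\beta$ uniformly over the feasible set. Your explicit handling of the degenerate cases $a\in\{0,1\}$ and of the infimum over the feasible set is slightly more careful than the paper's proof, but the argument is the same.
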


\begin{remark}[Dependence on the CVaR level]
The bound \eqref{eq:reverse-kl-quarter} is exponential in the
divergence budget $B$. In \eqref{eq:general-cvar}, this term is
multiplied by $(1-\alpha)^{-1}$. Balancing these two terms when
choosing the separation between the hard models yields the factor
$\log(1/(1-\alpha))$ appearing in the applications below.
\end{remark}

%% file: sections/05_method.tex
\section{Applying the Bayesian CVaR Bound}
\label{sec:methodology}

This section summarizes the steps used to apply the Bayesian CVaR bound
to a concrete statistical problem. We first fix the prior and, when
necessary, replace the original loss by a bounded lower proxy. We then
choose a reference distribution, lower bound the corresponding
reference excess loss, and upper bound the prior-averaged divergence.
These quantities are inserted into the Bernoulli inverse and then into
the CVaR bound. If the prior belongs to a parameterized family, the
separation parameter is chosen only after the fixed-prior inequality
has been established.

\subsection{Fixed prior and bounded loss}

Choose a prior $\mu$ and a loss or bounded proxy $\ell$ satisfying
\begin{equation}
    0\le \ell\le L_{\max}
    \qquad\text{and}\qquad
    \ell\le L.
    \label{eq:method-bounded}
\end{equation}
Then
\begin{equation}
    \CVaR_\alpha(L)\ge\CVaR_\alpha(\ell).
    \label{eq:method-monotone}
\end{equation}

\subsection{Reference distribution}

Choose a probability distribution $Q$ on the transcript space. For thresholds $s\in[0,L_{\max}]$, compute
\begin{equation}
    b_s(Q)
    =
    \frac{1}{L_{\max}}
    \E_{\mu,Q}[(\ell-s)_+].
\end{equation}
Choose $Q$ so that both the excess-loss expectation and the divergence bound can be evaluated explicitly.

\subsection{Divergence bound}

For the same $Q$, bound
\begin{equation}
    B_f(\Alg,Q)
    =
    \E_{M\sim\mu}
    D_f(\mathbb P^{M,\Alg}\|Q).
\end{equation}
The bound must hold for every algorithm in the class being considered.

\subsection{Bernoulli inversion and CVaR}

A lower bound
\begin{equation}
    b_s(Q)\ge\beta
\end{equation}
combined with an upper bound $B_f(\Alg,Q)\le\overline B$ gives a lower bound on $a_f^-(B_f;b_s)$ through the chosen divergence specialization. This is substituted into \eqref{eq:general-cvar}.

For reverse KL, if $b_s\ge1/4$, then
\begin{equation}
    a_f^-(B_f;b_s)
    \ge
    \frac{1}{16}e^{-4B_f}.
    \label{eq:method-reverse-kl}
\end{equation}

\begin{remark}[Fixed prior and hard-family choice]
For a fixed separation parameter, the prior is fixed and the lower bound is a statement about that specified prior. Afterward, choosing the separation parameter as a function of $\alpha$, $T$, $n$, or the privacy parameters gives an existence result: at least one prior in the chosen family satisfies the resulting optimized lower bound.
\end{remark}

%% file: sections/06_gaussian.tex
\section{Bayesian CVaR Bounds for Gaussian Models}
\label{sec:gaussian}

This section applies the general Bayesian CVaR bound to two Gaussian
statistical problems without privacy constraints. We first consider
Gaussian mean estimation and then Gaussian $K$-armed stochastic
bandits. In each case, we first derive a lower bound for every fixed
prior in a parameterized family and then choose the separation
parameter to obtain an explicit hard-prior rate. The same Gaussian
model families will be used again under LDP in
Section~\ref{sec:private-applications}.

\subsection{Gaussian mean estimation}

Let
\begin{equation}
    X_1,\ldots,X_n\stackrel{\mathrm{i.i.d.}}{\sim}\mathcal N(\theta,\sigma^2),
    \qquad
    \theta\in[-R,R],
    \label{eq:gaussian-mean-model}
\end{equation}
where $\sigma>0$. For $\Delta\in(0,R]$, fix the prior
\begin{equation}
    \mu_\Delta
    \defeq
    \frac12\delta_{-\Delta}+\frac12\delta_{+\Delta}.
    \label{eq:gaussian-mean-prior}
\end{equation}
For an estimator $\widehat\theta$, use the bounded loss
\begin{equation}
    \ell_\Delta
    \defeq
    \min\left\{
        (\widehat\theta-\theta)^2,
        \Delta^2
    \right\}.
    \label{eq:gaussian-mean-loss}
\end{equation}
Then $0\le\ell_\Delta\le\Delta^2$ and, for every $z\in\R$,
\begin{equation}
    \ell_\Delta(+\Delta,z)+\ell_\Delta(-\Delta,z)\ge\Delta^2.
    \label{eq:gaussian-mean-separation}
\end{equation}

The following theorem gives the fixed-prior Bayesian CVaR lower bound
for squared-error estimation under the two-point prior
$\mu_\Delta$. It first holds for every fixed
$\Delta\in(0,R]$; choosing $\Delta$ afterward gives the stated
dependence on $n$, $\sigma^2$, $R$, and the CVaR level $\alpha$.

\begin{theorem}[Gaussian mean estimation]
\label{thm:gaussian-mean}
For every $\Delta\in(0,R]$ and $\alpha\in(0,1)$,
\begin{equation}
\begin{aligned}
    \inf_{\widehat\theta}
    \CVaR_\alpha^{\mu_\Delta}
    \left((\widehat\theta-\theta)^2\right)
    \ge
    \Delta^2
    \min\left\{
        \frac12,
        \frac{1}{16(1-\alpha)}
        \exp\left(-\frac{4n\Delta^2}{\sigma^2}\right)
    \right\}.
\end{aligned}
\label{eq:gaussian-mean-fixed}
\end{equation}
If $\delta=1-\alpha<1/8$, choosing
\begin{equation}
    \Delta^2
    \defeq
    \min\left\{
        R^2,
        \frac{\sigma^2}{4n}
        \log\frac{1}{8\delta}
    \right\}
    \label{eq:gaussian-mean-choice}
\end{equation}
gives
\begin{equation}
    \inf_{\widehat\theta}
    \CVaR_\alpha^{\mu_\Delta}
    \left((\widehat\theta-\theta)^2\right)
    \ge
    \frac12
    \min\left\{
        R^2,
        \frac{\sigma^2}{4n}
        \log\frac{1}{8(1-\alpha)}
    \right\}.
    \label{eq:gaussian-mean-optimized}
\end{equation}
In particular,
\begin{equation}
    \inf_{\widehat\theta}
    \CVaR_\alpha^{\mu_\Delta}
    \left((\widehat\theta-\theta)^2\right)
    \gtrsim
        \min\left\{
            R^2,
            \frac{\sigma^2}{n}
            \log\frac{1}{1-\alpha}
        \right\},
\end{equation}
up to universal constants in the stated range of $\alpha$.
\end{theorem}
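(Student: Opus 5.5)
We reduce to the bounded proxy $\ell_\Delta$ of \eqref{eq:gaussian-mean-loss}. Since $(\widehat\theta-\theta)^2\ge\ell_\Delta$, the monotonicity \eqref{eq:method-monotone} gives $\CVaR_\alpha^{\mu_\Delta}((\widehat\theta-\theta)^2)\ge\CVaR_\alpha^{\mu_\Delta}(\ell_\Delta)$. We then apply Corollary~\ref{cor:general-cvar} with $L_{\max}=\Delta^2$ and the reverse-KL choice $f(u)=-\log u$. The passive estimation problem is the ISDM with trivial actions, and the transcript is $H_T=(X_1,\ldots,X_n)$; we take the estimator as a measurable function of it, which randomization does not affect.

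For the reference distribution we take the midpoint law $Q=\mathcal N(0,\sigma^2)^{\otimes n}$, which does not depend on $M$. The reverse-KL budget is then
\begin{equation*}
B_f=\E_{M\sim\mu_\Delta}\KL\!\left(Q\,\middle\|\,\mathcal N(M,\sigma^2)^{\otimes n}\right)=\frac{n\Delta^2}{2\sigma^2}.
\end{equation*}
This is at most $n\Delta^2/\sigma^2$, and for the stated constant it is enough to use $B_f\le n\Delta^2/\sigma^2$, giving $e^{-4B_f}\ge e^{-4n\Delta^2/\sigma^2}$. Under $Q$ the observation $z=\widehat\theta(H_T)$ is independent of $M$, so by \eqref{eq:gaussian-mean-separation},
\begin{equation*}
\E_{\mu,Q}[\ell_\Delta]=\tfrac12\E_Q\big[\ell_\Delta(+\Delta,z)+\ell_\Delta(-\Delta,z)\big]\ge\Delta^2/2.
\end{equation*}

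The main obstacle is turning this into a lower bound on $b_s(Q)$ for every $s\in[0,\Delta^2]$, because the hinge loses mass: $(\ell-s)_+\ge \ell-s$ only gives $b_s\ge 1/2-s/\Delta^2$. We therefore split on $s$.

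\emph{Case $s\ge\Delta^2/2$.} The objective in \eqref{eq:general-cvar} is at least $s\ge\Delta^2/2$, which is the first branch of the minimum.

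\emph{Case $s\le\Delta^2/4$.} Here $b_s(Q)\ge 1/2-1/4=1/4$. Lemma~\ref{lem:reverse-kl} in the form \eqref{eq:reverse-kl-quarter} gives $a_f^-\ge\frac1{16}e^{-4B_f}$, so the objective is at least $\frac{\Delta^2}{16(1-\alpha)}e^{-4n\Delta^2/\sigma^2}$.

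\emph{Case $\Delta^2/4<s<\Delta^2/2$.} The term $s$ alone only gives $\Delta^2/4$, so we need a different route. One option is \eqref{eq:reverse-kl-beta-general} with the weaker $\beta=1/2-s/\Delta^2$, which costs worse constants. The cleaner option is to sharpen the bound on $b_s$ using the structure of $\ell_\Delta$. For each $z$, write $u=\ell_\Delta(+\Delta,z)$ and $v=\ell_\Delta(-\Delta,z)$, so $u,v\in[0,\Delta^2]$ and $u+v\ge\Delta^2$. By convexity, $(u-s)_++(v-s)_+$ is minimized when $u=v=\Delta^2/2$, giving the value $\Delta^2-2s$; the extreme split $(\Delta^2,0)$ gives the larger value $\Delta^2-s$. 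Hence $b_s\ge\frac12-\frac{s}{\Delta^2}$ is essentially tight, and this case has to be handled by accepting the constant degradation. If the constants in \eqref{eq:gaussian-mean-fixed} do not come out exactly, we will adjust the case threshold or the choice of $\beta$ until $\min\{\Delta^2/2,\cdot\}$ is reached.

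For the optimized statement, substitute \eqref{eq:gaussian-mean-choice}. If $\Delta^2=\frac{\sigma^2}{4n}\log\frac1{8\delta}$, then $e^{-4n\Delta^2/\sigma^2}=8\delta$, so the second term equals $\Delta^2/2$. If instead $\Delta^2=R^2$ is the smaller value, the exponential is at least $8\delta$, so the second term is again at least $\Delta^2/2$. Either way the bound is $\Delta^2/2$, which is \eqref{eq:gaussian-mean-optimized}. Finally, $\log\frac1{8\delta}\gtrsim\log\frac1\delta$ for $\delta$ bounded away from $1/8$, which gives the $\gtrsim$ form.
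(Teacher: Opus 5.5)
There is a genuine gap, in your middle range $\Delta^2/4<s<\Delta^2/2$. It comes from using only the sum condition \eqref{eq:gaussian-mean-separation}. Your convexity argument minimizes $(u-s)_++(v-s)_+$ over all $u,v\in[0,\Delta^2]$ with $u+v\ge\Delta^2$. The pairs $(u,v)=(\ell_\Delta(+\Delta,z),\ell_\Delta(-\Delta,z))$ that actually occur are much more special. Since $\max\{|z-\Delta|,|z+\Delta|\}=|z|+\Delta\ge\Delta$, one of the two clipped losses always equals $\Delta^2$: it is $\ell_\Delta(-\Delta,z)$ when $z\ge0$ and $\ell_\Delta(+\Delta,z)$ when $z\le0$. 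So the extreme split $(\Delta^2,0)$ you mention is the only kind that arises, and $(u-s)_++(v-s)_+\ge\Delta^2-s$ pointwise. For any $Q$ not depending on $M$ this gives $b_s(Q)\ge\frac12(1-s/\Delta^2)\ge\frac14$ on all of $[0,\Delta^2/2]$. This is the step the paper's proof uses. With it the middle case disappears, and the two-case split at $s=\Delta^2/2$ gives \eqref{eq:gaussian-mean-fixed} directly. Your assertion that $b_s\ge\frac12-s/\Delta^2$ is essentially tight is false for this loss.

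Without this fact, your fallback of adjusting the threshold or $\beta$ cannot work, so the problem is not just constants. From $b_s\ge\beta:=\frac12-s/\Delta^2$ alone, even the exact inverse is too small. For $a\le\beta$ one has $\KL(\Bern(\beta)\|\Bern(a))\le\beta\log(\beta/a)$, so $a=\beta e^{-B_f/\beta}$ is feasible and $a_f^-(B_f;\beta)\le\beta e^{-B_f/\beta}$. Your lower bound on the CVaR objective at $s=\Delta^2(\frac12-\beta)$ is then at most $\Delta^2\bigl[\frac12-\beta\bigl(1-\frac{1}{1-\alpha}e^{-B_f/\beta}\bigr)\bigr]$. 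This is strictly below $\Delta^2/2$ for all small $\beta$, because $B_f>0$. Hence, whenever the exponential branch is at least $\frac12$, no choice of threshold or $\beta$ yields the stated minimum. At best your split gives $\Delta^2\min\{\frac14,\frac{1}{16(1-\alpha)}e^{-2n\Delta^2/\sigma^2}\}$, which is a different inequality. This also breaks your derivation of \eqref{eq:gaussian-mean-optimized}, since the choice \eqref{eq:gaussian-mean-choice} lands exactly in the regime where the exponential branch is at least $\frac12$; only the $\gtrsim$ rate would survive, with worse constants. The rest of your plan is sound. The midpoint reference $Q=\mathcal N(0,\sigma^2)^{\otimes n}$ is legitimate; the paper instead takes $\mathcal N(-\Delta,\sigma^2)^{\otimes n}$, with budget $n\Delta^2/\sigma^2$. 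Combined with the pointwise max property, your smaller budget $n\Delta^2/(2\sigma^2)$ would even give the sharper exponent $-2n\Delta^2/\sigma^2$.
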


\begin{remark}[Relation to the mean-estimation minimax rate]
The optimized lower bound has the familiar $1/n$ squared-error dependence, with an additional logarithmic factor in the CVaR tail level. The theorem is a fixed-prior statement before the choice in \eqref{eq:gaussian-mean-choice}.
\end{remark}

\subsection{K-armed Gaussian bandits}

Let $K\ge2$. For $g\in(0,1]$ and $j\in[K]$, define the Gaussian
bandit model $M_j^g$ by
\begin{equation}
    Y_t\mid A_t=a,M_j^g
    \sim
    \begin{cases}
        \mathcal N(g,\sigma^2),& a=j,\\
        \mathcal N(0,\sigma^2),& a\ne j.
    \end{cases}
    \label{eq:gaussian-k-family}
\end{equation}
Thus arm $j$ is uniquely optimal with gap $g$. Fix the uniform prior
\begin{equation}
    \mu_g^K
    =
    \frac1K\sum_{j=1}^K\delta_{M_j^g}.
    \label{eq:gaussian-k-prior}
\end{equation}
Let $N_j(T)$ be the number of pulls of arm $j$. The pseudo-regret under
$M_j^g$ is
\begin{equation}
    R_T(M_j^g)=g\bigl(T-N_j(T)\bigr).
    \label{eq:kbandit-regret}
\end{equation}

The following theorem lower bounds the Bayesian CVaR of cumulative
pseudo-regret under the fixed prior $\mu_g^K$. The first bound holds
for every fixed gap $g\in(0,1]$. Choosing the gap afterward gives
an explicit lower bound in terms of $T$, $K$, $\sigma^2$, and
$\alpha$.

\begin{theorem}[$K$-armed Gaussian bandit]
\label{thm:gaussian-k-bandit}
For every $g\in(0,1]$ and $\alpha\in(0,1)$,
\begin{equation}
    \inf_{\Alg\in\Dcal_T}
    \CVaR_\alpha^{\mu_g^K}(R_T)
    \ge
    Tg
    \min\left\{
        \frac{K-1}{2K},
        \frac{1}{16(1-\alpha)}
        \exp\left(-\frac{2g^2T}{K\sigma^2}\right)
    \right\}.
    \label{eq:kbandit-fixed}
\end{equation}
For $1-\alpha<K/[8(K-1)]$, choosing
\begin{equation}
    g^2
    \defeq
    \min\left\{
        1,
        \frac{K\sigma^2}{2T}
        \log\frac{K}{8(K-1)(1-\alpha)}
    \right\}
    \label{eq:kbandit-gap}
\end{equation}
gives, for a suitable uniform hard prior,
\begin{equation}
    \inf_{\Alg\in\Dcal_T}
    \CVaR_\alpha^{\mu_g^K}(R_T)
    \ge
    \frac{K-1}{2K}T
    \min\left\{
        1,
        \sigma\sqrt{\frac{K}{2T}}
        \sqrt{\log\frac{K}{8(K-1)(1-\alpha)}}
    \right\},
    \label{eq:kbandit-optimized}
\end{equation}
Consequently, for fixed $K$,
\begin{equation}
    \inf_{\Alg\in\Dcal_T}
    \CVaR_\alpha^{\mu_g^K}(R_T)
    \gtrsim
        \min\left\{
            T,
            \sigma\sqrt{KT\log\frac{1}{1-\alpha}}
        \right\}
    \label{eq:kbandit-optimized-rate}
\end{equation}
up to constants depending only on $K$ and for the stated range of
$\alpha$.
\end{theorem}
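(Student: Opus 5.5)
The plan is to apply Corollary~\ref{cor:general-cvar} with the reverse-KL choice $f(u)=-\log u$ and the loss $L=R_T$. Since $0\le N_j(T)\le T$, this loss satisfies $0\le R_T\le L_{\max}\defeq gT$. For each algorithm $\Alg$, I would take as reference the transcript law $Q=\mathbb P^{M_0,\Alg}$ of the same algorithm run in the null environment $M_0$, where every arm is $\mathcal N(0,\sigma^2)$. The corollary allows $Q$ to depend on $\Alg$, since it is stated for every $Q$ and every $\Alg$. Two quantities must then be controlled: the average divergence $B_f(\Alg,Q)$ and the reference excess loss $b_s(Q)$.

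\textbf{Divergence.} With $f=-\log$ we have $D_f(\mathbb P^{M_j^g,\Alg}\|Q)=\KL(\mathbb P^{M_0,\Alg}\|\mathbb P^{M_j^g,\Alg})$. I would apply the adaptive KL chain rule over the rounds. The policy kernels $\pi_t$ are identical under both laws, so they contribute nothing. At round $t$ the observation kernels differ only when $A_t=j$, and in that case the per-round divergence is $\KL(\mathcal N(0,\sigma^2)\|\mathcal N(g,\sigma^2))=g^2/(2\sigma^2)$. This gives
\[
\KL(\mathbb P^{M_0,\Alg}\|\mathbb P^{M_j^g,\Alg})=\frac{g^2}{2\sigma^2}\,\E_{M_0}[N_j(T)].
\]
Averaging over the uniform prior and using $\sum_j N_j(T)=T$ yields $B_f(\Alg,Q)=g^2T/(2K\sigma^2)$, uniformly in $\Alg$. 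This is the step I expect to need the most care: the direction of the KL must be respected, and the chain rule must be justified with the algorithm's randomization and adaptivity included in the transcript. Choosing the null model as reference is what makes the count $N_j$ appear under a single law $Q$, so that the sum over $j$ telescopes to $T$.

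\textbf{Reference excess loss.} Under $Q$ the transcript is independent of $M$. Hence, for a fixed transcript, $b_s(Q)=\frac{1}{gT}\E_Q\bigl[\frac1K\sum_j (g(T-N_j)-s)_+\bigr]$. By Jensen's inequality for the convex map $x\mapsto(x-s)_+$, applied to the average over $j$, this is at least
\[
\frac{1}{gT}\Bigl(\frac{gT(K-1)}{K}-s\Bigr)=\frac{K-1}{K}-\frac{s}{gT}.
\]
Set $s_0\defeq gT(K-1)/(2K)$. For $s\le s_0$ we get $b_s(Q)\ge (K-1)/(2K)\ge 1/4$, using $K\ge2$. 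Lemma~\ref{lem:reverse-kl} with $\beta=1/4$ then gives
\[
a_f^-\bigl(B_f;b_s(Q)\bigr)\ge \tfrac1{16}\exp\!\left(-\frac{2g^2T}{K\sigma^2}\right).
\]

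\textbf{Assembling the fixed-gap bound.} I would split the infimum over $s\in[0,gT]$ in \eqref{eq:general-cvar} at $s_0$. For $s\ge s_0$ the objective is at least $s\ge s_0=Tg\frac{K-1}{2K}$. For $s<s_0$ it is at least
\[
\frac{gT}{1-\alpha}\cdot\frac1{16}\exp\!\left(-\frac{2g^2T}{K\sigma^2}\right).
\]
Taking the minimum of the two cases and then the infimum over $\Alg$ gives \eqref{eq:kbandit-fixed}.

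\textbf{Choosing the gap.} The gap choice \eqref{eq:kbandit-gap} ensures $g\le1$ and
\[
\exp\!\left(-\frac{2g^2T}{K\sigma^2}\right)\ge \frac{8(K-1)(1-\alpha)}{K}.
\]
The logarithm in \eqref{eq:kbandit-gap} is positive exactly under the stated range $1-\alpha<K/[8(K-1)]$. With this inequality the second term in the minimum dominates the first, so the bound equals $\frac{K-1}{2K}Tg$. Substituting $g=\min\{1,\sigma\sqrt{K/(2T)}\sqrt{\log(\cdot)}\}$ gives \eqref{eq:kbandit-optimized}.

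\textbf{Rate.} For fixed $K$, we have $\log\frac{K}{8(K-1)(1-\alpha)}\gtrsim\log\frac1{1-\alpha}$ when $1-\alpha$ is small enough. Absorbing constants depending on $K$ yields \eqref{eq:kbandit-optimized-rate}.
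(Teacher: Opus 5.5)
Your proposal is correct and follows the paper's proof step for step: the null-model reference $Q=\mathbb P^{M_0,\Alg}$, the pointwise identity $\frac1K\sum_j R_T(M_j^g)=\frac{K-1}{K}gT$ with convexity to get $b_s(Q)\ge\frac{K-1}{2K}$, the adaptive chain rule giving $B_f=g^2T/(2K\sigma^2)$, the split of the $s$-infimum at $\frac{K-1}{2K}Tg$, and the same gap choice. Your caveat in the rate step is well placed, and the paper glosses over it: $\log\frac{K}{8(K-1)(1-\alpha)}\gtrsim\log\frac1{1-\alpha}$ only holds once $1-\alpha$ is bounded away from $K/[8(K-1)]$, so the optimized bound alone does not cover the whole stated range; near that threshold you can instead take $g^2=\min\{1,K\sigma^2/(2T)\}$ in \eqref{eq:kbandit-fixed}, where $\log\frac1{1-\alpha}$ is bounded by a constant.
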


\begin{remark}[Origin of the $1/K$ factor]
The uniform prior contains $K$ possible optimal arms. Under the
one-good-arm construction \eqref{eq:gaussian-k-family}, the null
reference with all arms distributed as $\mathcal N(0,\sigma^2)$ differs
from $M_j^g$ only when arm $j$ is selected. Averaging the resulting
transcript divergence over $j$ and using
$\sum_{j=1}^K N_j(T)=T$ produces the factor $1/K$.
\end{remark}

%% file: sections/07_ldp_isdm.tex
\section{Interactive Statistical Decision Making under LDP}
\label{sec:private-isdm}

This section incorporates local differential privacy into the ISDM
model. We first define the released observation and transcript laws
when each raw observation is passed through an LDP channel. We then
recall the contraction of distinguishability induced by an LDP channel
and use it together with the KL chain rule to obtain a divergence bound
for the complete adaptive released transcript. This transcript-level
bound will provide the information term in the Bayesian CVaR results
of Section~\ref{sec:private-applications}.

\subsection{Observation model under LDP}

At round $t$, given the released history
\begin{equation}
    H_{t-1}^{\mathrm{LDP}}
    =(A_1,Z_1,\ldots,A_{t-1},Z_{t-1}),
\end{equation}
the learner chooses
\begin{equation}
    A_t\sim\pi_t(\cdot\mid H_{t-1}^{\mathrm{LDP}}).
\end{equation}
The model then generates
\begin{equation}
    Y_t\sim P_t^M(\cdot\mid A_t,H_{t-1}^{\mathrm{LDP}}),
\end{equation}
and the learner observes
\begin{equation}
    Z_t\sim W_t(\cdot\mid Y_t,A_t,H_{t-1}^{\mathrm{LDP}}).
\end{equation}

We first specify the privacy requirement imposed on each released
observation.

\begin{definition}[Sequential local differential privacy]
\label{def:sequential-ldp}
The channel $W_t$ is $\varepsilon_t$-LDP if, for every history $h$,
action $a$, pair of raw observations $y,y'$, and measurable set
$C\subseteq\Zcal$,
\begin{equation}
    W_t(C\mid y,a,h)
    \le
    e^{\varepsilon_t}
    W_t(C\mid y',a,h).
    \label{eq:ldp-def}
\end{equation}
A procedure is $\epsvec$-LDP if this condition holds at every round,
where
\begin{equation}
    \epsvec=(\varepsilon_1,\ldots,\varepsilon_T).
\end{equation}
The condition is meaningful for every finite $\varepsilon_t\ge0$.
The case $\varepsilon_t=0$ forces the released distribution to be the
same for every raw observation, while larger values impose weaker
privacy constraints. In the lower bounds below, the additional
restriction $\varepsilon_t\le1$ is used only to simplify the contraction
coefficient to a constant multiple of $\varepsilon_t^2$.
\end{definition}

Integrating out the raw observation gives the released observation kernel
\begin{equation}
    \widetilde P_t^{M,\Alg}(dz\mid a,h)
    \defeq
    \int W_t(dz\mid y,a,h)
    P_t^M(dy\mid a,h).
    \label{eq:released-kernel}
\end{equation}
The released transcript is
\begin{equation}
    H_T^{\mathrm{LDP}}
    =(A_1,Z_1,\ldots,A_T,Z_T),
\end{equation}
and its distribution under $(M,\Alg)$ is denoted by
\begin{equation}
    \mathbb P_{\mathrm{LDP}}^{M,\Alg}.
\end{equation}

\begin{remark}[ISDM representation under LDP]
Equation~\eqref{eq:released-kernel} shows directly how the LDP problem
fits the ISDM model. The model still specifies an observation law, the
procedure still selects actions from the available history, and the
interaction still induces a transcript distribution. The only change is
that the learner observes $Z_t$ rather than $Y_t$.
\end{remark}

For a fixed prior $\mu$, define the Bayesian CVaR value under LDP by
\begin{equation}
    \Cval_\alpha^{\mathrm{LDP}}(\mu)
    \defeq
    \inf_{\Alg\in\Dcal_{\epsvec}^{\mathrm{LDP}}}
    \CVaR_\alpha^{\mu,\Alg}(L).
    \label{eq:ldp-value}
\end{equation}

\subsection{Divergence contraction}

The effect of an LDP channel is most naturally stated in terms of the
marginal distribution that it induces on the released observation. Let
$P$ be a distribution on the raw observation space $\Ycal$, and let
$W$ be a Markov kernel from $\Ycal$ to the released observation space
$\Zcal$. Define the induced released marginal by
\begin{equation}
    M_P^W(C)
    \defeq
    \int_{\Ycal} W(C\mid y)\,P(dy),
    \qquad C\subseteq\Zcal.
    \label{eq:induced-private-marginal}
\end{equation}
Thus, if
\begin{equation}
    Y\sim P,
    \qquad
    Z\mid Y=y\sim W(\cdot\mid y),
\end{equation}
then
\begin{equation}
    Z\sim M_P^W.
\end{equation}
Similarly, a second raw observation distribution $P'$ induces the
released marginal
\begin{equation}
    M_{P'}^W(C)
    \defeq
    \int_{\Ycal} W(C\mid y)\,P'(dy).
\end{equation}

For $\varepsilon\ge0$, define
\begin{equation}
    \kappa(\varepsilon)
    \defeq
    \min\{4,e^{2\varepsilon}\}(e^\varepsilon-1)^2.
    \label{eq:kappa-def}
\end{equation}
A contraction inequality of Duchi, Jordan, and Wainwright
\cite{duchi2018minimax} states that if $W$ is $\varepsilon$-LDP, then
for any two raw observation distributions $P$ and $P'$, the corresponding
released marginals satisfy
\begin{equation}
\begin{aligned}
    &\KL\left(M_P^W\middle\|M_{P'}^W\right)
    +
    \KL\left(M_{P'}^W\middle\|M_P^W\right)
    \\
    &\qquad\le
    \kappa(\varepsilon)\TV(P,P')^2.
\end{aligned}
\label{eq:ldp-contraction}
\end{equation}
The left-hand side is therefore a symmetrized KL divergence between
released marginals, while the total variation distance on the right-hand
side is computed from the raw observation distributions.

In the sequential setting, for a fixed history $h$ and action $a$, define
\begin{equation}
    \widetilde P_t^M(C\mid a,h)
    \defeq
    \int_{\Ycal}
        W_t(C\mid y,a,h)
        P_t^M(dy\mid a,h).
    \label{eq:conditional-induced-marginal}
\end{equation}
This is the conditional marginal distribution of the released
observation $Z_t$ under model $M$, given
$(A_t,H_{t-1})=(a,h)$. If $W_t$ is $\varepsilon_t$-LDP, then for any
two models $M$ and $M'$, \eqref{eq:ldp-contraction} gives
\begin{align}
    &\KL\left(
        \widetilde P_t^M(\cdot\mid a,h)
        \middle\|
        \widetilde P_t^{M'}(\cdot\mid a,h)
    \right)
    \nonumber\\
    &\quad+
    \KL\left(
        \widetilde P_t^{M'}(\cdot\mid a,h)
        \middle\|
        \widetilde P_t^M(\cdot\mid a,h)
    \right)
    \nonumber\\
    &\qquad\le
    \kappa(\varepsilon_t)
    \TV\left(
        P_t^M(\cdot\mid a,h),
        P_t^{M'}(\cdot\mid a,h)
    \right)^2.
    \label{eq:conditional-ldp-contraction}
\end{align}

The one-round contraction inequality controls the divergence between
two released observation laws for a fixed history and action. The next
proposition extends this control to the full adaptive transcript by
summing the conditional divergences over time.

\begin{proposition}[Adaptive transcript KL under LDP]
\label{prop:ldp-transcript}
Fix two models $M$ and $M'$ and an $\epsvec$-LDP procedure. Let
\begin{equation}
    d_t(h,a)
    \defeq
    \TV\left(
        P_t^M(\cdot\mid a,h),
        P_t^{M'}(\cdot\mid a,h)
    \right).
\end{equation}
Then
\begin{equation}
\begin{aligned}
    &\KL\left(
        \mathbb P_{\mathrm{LDP}}^{M,\Alg}
        \middle\|
        \mathbb P_{\mathrm{LDP}}^{M',\Alg}
    \right)\\
    &\qquad\le
    \E_{\mathbb P_{\mathrm{LDP}}^{M,\Alg}}
    \left[
        \sum_{t=1}^T
        \kappa(\varepsilon_t)
        d_t(H_{t-1},A_t)^2
    \right].
\end{aligned}
\label{eq:adaptive-ldp-kl}
\end{equation}
\end{proposition}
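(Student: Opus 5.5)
The plan is to apply the chain rule for KL divergence to the released transcript and then bound each round's conditional term with the one-round contraction \eqref{eq:conditional-ldp-contraction}.

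\textbf{Factorization.} Under $(M,\Alg)$, the released transcript $H_T^{\mathrm{LDP}}=(A_1,Z_1,\ldots,A_T,Z_T)$ has the sequential factorization
\begin{equation*}
\prod_{t=1}^T \pi_t(dA_t\mid H_{t-1}^{\mathrm{LDP}})\,\widetilde P_t^{M}(dZ_t\mid A_t,H_{t-1}^{\mathrm{LDP}}).
\end{equation*}
Here $\widetilde P_t^M$ is the conditional released marginal \eqref{eq:conditional-induced-marginal}, obtained by integrating out $Y_t$. The same factorization holds under $M'$, with the same policy kernels $\pi_t$ and the same privacy channels $W_t$, since both belong to the procedure. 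Only the raw kernels $P_t^M$ and $P_t^{M'}$ differ.

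\textbf{Chain rule.} Applying the KL chain rule to the $2T$ coordinates, the action terms vanish because the conditional laws of $A_t$ given the past coincide under both models. This leaves
\begin{equation*}
\KL\bigl(\mathbb P_{\mathrm{LDP}}^{M,\Alg}\big\|\mathbb P_{\mathrm{LDP}}^{M',\Alg}\bigr)
=\E_{\mathbb P_{\mathrm{LDP}}^{M,\Alg}}\Bigl[\sum_{t=1}^T \KL\bigl(\widetilde P_t^M(\cdot\mid A_t,H_{t-1})\big\|\widetilde P_t^{M'}(\cdot\mid A_t,H_{t-1})\bigr)\Bigr].
\end{equation*}

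\textbf{Contraction.} For each fixed $(h,a)$, the channel $W_t(\cdot\mid\cdot,a,h)$ is an $\varepsilon_t$-LDP kernel from $\Ycal$ to $\Zcal$ by Definition~\ref{def:sequential-ldp}. Applying \eqref{eq:conditional-ldp-contraction} and dropping the nonnegative reverse KL term bounds the forward conditional KL by $\kappa(\varepsilon_t)d_t(h,a)^2$. Substituting this pointwise bound inside the expectation gives \eqref{eq:adaptive-ldp-kl}.

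\textbf{Main obstacle.} The step needing the most care is justifying the chain rule in this general measurable setting. This requires regular conditional distributions on the history spaces, which exist because all kernels are given as Markov kernels, so the disintegration is explicit. One must also handle the case where a conditional KL term is infinite. That case is harmless: the contraction shows every conditional term is finite, since $\kappa(\varepsilon_t)<\infty$ for finite $\varepsilon_t$. A secondary point is that $W_t$ may depend on the released history; this is fine because the contraction is applied conditionally on $(h,a)$, where the channel is a fixed kernel. Measurability of $(h,a)\mapsto d_t(h,a)$ follows from measurability of the kernels, so the expectation in \eqref{eq:adaptive-ldp-kl} is well defined.
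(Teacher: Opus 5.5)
Your proposal is correct and follows the paper's proof essentially step for step: factor the released transcript through the common policy kernels $\pi_t$ and the induced marginals $\widetilde P_t^M$, apply the KL chain rule so that the action terms cancel, then bound each forward conditional KL by the symmetrized one and invoke the conditional contraction \eqref{eq:conditional-ldp-contraction}. Your extra remarks on regular conditional distributions, possibly infinite conditional terms, history-dependent channels, and measurability of $d_t$ are sound; the paper leaves these points implicit.
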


\begin{remark}[Role of the contraction coefficient]
The function $\kappa(\varepsilon)$ is the one-round contraction
coefficient in \eqref{eq:ldp-contraction}. Thus, in the applications,
the quantity
\begin{equation}
    \sum_t\kappa(\varepsilon_t)
\end{equation}
records the cumulative effect of the privacy channels. When
$\varepsilon_t\le1$, it is at most
$16\sum_t\varepsilon_t^2$.
\end{remark}

%% file: sections/08_ldp_applications.tex
\section{Bayesian CVaR Bounds under LDP}
\label{sec:private-applications}

This section combines the generalized Fano-CVaR bound with the
released transcript divergence control developed in
Section~\ref{sec:private-isdm}. The prior, loss, and model separation
used in the Gaussian problems remain unchanged; the main
difference is that the divergence between transcript laws is now
controlled through the LDP channels. We first treat Gaussian mean
estimation with heterogeneous privacy parameters, then Gaussian
$K$-armed bandits under sequential LDP, and finally give a bounded
mean-estimation result for comparison with existing heterogeneous-LDP
high-probability bounds.

\subsection{Gaussian mean estimation under heterogeneous LDP}

Consider
\begin{equation}
    X_i\stackrel{\mathrm{i.i.d.}}{\sim}\mathcal N(\theta,\sigma^2),
    \qquad
    \theta\in[-R,R],
\end{equation}
with the prior $\mu_\Delta$ from \eqref{eq:gaussian-mean-prior}. User $i$ applies an $\varepsilon_i$-LDP channel to $X_i$. Define
\begin{equation}
    \Lambda_n
    \defeq
    \sum_{i=1}^n\kappa(\varepsilon_i),
    \qquad
    \kappa(\varepsilon)
    \defeq
    \min\{4,e^{2\varepsilon}\}(e^\varepsilon-1)^2.
    \label{eq:gaussian-ldp-lambda}
\end{equation}

The following theorem gives the fixed-prior Bayesian CVaR lower bound
when the $i$th observation is released through an
$\varepsilon_i$-LDP channel. The privacy parameters enter through the
cumulative contraction quantity $\Lambda_n$. For
$\varepsilon_i\le1$, the resulting bound depends on the privacy levels
through $\sum_{i=1}^n\varepsilon_i^2$.

\begin{theorem}[Gaussian mean estimation under heterogeneous LDP]
\label{thm:gaussian-ldp-mean}
For every $\Delta\in(0,R]$ and every $\alpha\in(0,1)$,
\begin{equation}
\begin{aligned}
    &\inf_{\substack{\widehat\theta,\,W_1,\ldots,W_n\\
        W_i\text{ is }\varepsilon_i\text{-LDP for all }i}}
    \CVaR_\alpha^{\mu_\Delta}
    \left((\widehat\theta-\theta)^2\right)\\
    &\qquad\ge
    \Delta^2
    \min\left\{
        \frac12,
        \frac{1}{16(1-\alpha)}
        \exp\left(-\frac{2\Delta^2\Lambda_n}{\sigma^2}\right)
    \right\}.
\end{aligned}
\label{eq:gaussian-ldp-mean-fixed}
\end{equation}
If $\varepsilon_i\le1$ for all $i$, then
\begin{equation}
\begin{aligned}
    &\inf_{\substack{\widehat\theta,\,W_1,\ldots,W_n\\
        W_i\text{ is }\varepsilon_i\text{-LDP for all }i}}
    \CVaR_\alpha^{\mu_\Delta}
    \left((\widehat\theta-\theta)^2\right)\\
    &\qquad\ge
    \Delta^2
    \min\left\{
        \frac12,
        \frac{1}{16(1-\alpha)}
        \exp\left(-\frac{32\Delta^2}{\sigma^2}
        \sum_{i=1}^n\varepsilon_i^2\right)
    \right\}.
\end{aligned}
\label{eq:gaussian-ldp-mean-fixed-simple}
\end{equation}
Assume \(\Lambda_n>0\). Fix $\delta:=1-\alpha<1/8$. Choosing
\begin{equation}
    \Delta^2
    \defeq
    \min\left\{
        R^2,
        \frac{\sigma^2}{2\Lambda_n}
        \log\frac{1}{8\delta}
    \right\}
\end{equation}
gives, for a suitable fixed prior in the family,
\begin{equation}
\begin{aligned}
    &\inf_{\Alg\in\Dcal_{\epsvec}^{\mathrm{LDP}}}
    \CVaR_\alpha^{\mu_\Delta}
    \left((\widehat\theta-\theta)^2\right)\\
    &\qquad\ge
    \frac12
    \min\left\{
        R^2,
        \frac{\sigma^2}{2\Lambda_n}
        \log\frac{1}{8(1-\alpha)}
    \right\}.
\end{aligned}
\label{eq:gaussian-ldp-mean-optimized}
\end{equation}
In particular, when $\varepsilon_i\le1$,
\begin{equation}
    \inf_{\Alg\in\Dcal_{\epsvec}^{\mathrm{LDP}}}
    \CVaR_\alpha^{\mu_\Delta}
    \left((\widehat\theta-\theta)^2\right)
    \gtrsim
        \min\left\{
            R^2,
            \frac{\sigma^2\log(1/(1-\alpha))}
            {\sum_{i=1}^n\varepsilon_i^2}
        \right\}
    \label{eq:gaussian-ldp-mean-rate}
\end{equation}
If $\Lambda_n=0$, then $\epsilon_i=0$ for every $i$. In this case,
the fixed-prior bound~(78), with $\Delta=R$, gives
\[
\inf_{\mathrm{ALG}\in\mathcal D_{\epsilon}^{\mathrm{LDP}}}
\operatorname{CVaR}_{\alpha}^{\mu_R}
\bigl((\widehat\theta-\theta)^2\bigr)
\ge
R^2\min\left\{\frac12,\frac{1}{16(1-\alpha)}\right\}.
\]
\end{theorem}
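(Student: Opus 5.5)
The plan is to run the same Fano--CVaR argument as in Theorem~\ref{thm:gaussian-mean}, with the loss, the prior $\mu_\Delta$ and the threshold analysis unchanged. Only the divergence budget changes, and it is now controlled by Proposition~\ref{prop:ldp-transcript} instead of the raw Gaussian KL. I fix $\Delta\in(0,R]$, arbitrary $\varepsilon_i$-LDP channels $W_1,\dots,W_n$, and an estimator $\widehat\theta$ measurable with respect to the released transcript. By \eqref{eq:method-monotone} it suffices to lower bound the CVaR of the clipped loss $\ell_\Delta$ in \eqref{eq:gaussian-mean-loss}, with $L_{\max}=\Delta^2$. I apply Corollary~\ref{cor:general-cvar} with reverse KL, $f(u)=-\log u$, so that $B_f(\Alg,Q)=\E_{M\sim\mu_\Delta}\KL(Q\|\mathbb P_{\mathrm{LDP}}^{M,\Alg})$. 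For the reference I take $Q=\mathbb P_{\mathrm{LDP}}^{-\Delta,\Alg}$, the released-transcript law under $\theta=-\Delta$ with the same channels. This $Q$ does not depend on $M$.

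\textbf{Reference excess loss.} Under $Q$ the released data, and hence $\widehat\theta$, are independent of $M\sim\mu_\Delta$. The separation \eqref{eq:gaussian-mean-separation} therefore gives $\E_{\mu_\Delta,Q}[\ell_\Delta]\ge\Delta^2/2$, exactly as in the nonprivate case. I would reuse verbatim the treatment of the infimum over $s\in[0,\Delta^2]$ from the proof of Theorem~\ref{thm:gaussian-mean}. That treatment bounds the objective by $\Delta^2/2$ on the large-threshold range, and on the remaining range it uses $b_s(Q)\ge 1/4$ together with \eqref{eq:reverse-kl-quarter}. Neither step depends on the observation model, so this produces the factor $\min\{1/2,\tfrac{1}{16(1-\alpha)}e^{-4B}\}$. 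I expect this $s$-splitting to be the most delicate bookkeeping step. It is not new here, however: it only involves $b_s(Q)$, which is identical to the nonprivate computation.

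\textbf{Divergence budget.} The term for $M=-\Delta$ vanishes, so $B_f=\tfrac12\KL(\mathbb P_{\mathrm{LDP}}^{-\Delta,\Alg}\|\mathbb P_{\mathrm{LDP}}^{+\Delta,\Alg})$. Proposition~\ref{prop:ldp-transcript} applies with one round per user (the passive case, where $d_i$ does not depend on history) and bounds this KL by $\sum_i\kappa(\varepsilon_i)\,\TV(\mathcal N(-\Delta,\sigma^2),\mathcal N(\Delta,\sigma^2))^2$. Pinsker's inequality with $\KL(\mathcal N(-\Delta,\sigma^2)\|\mathcal N(\Delta,\sigma^2))=2\Delta^2/\sigma^2$ gives $\TV^2\le\Delta^2/\sigma^2$. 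Hence $B_f\le\Lambda_n\Delta^2/(2\sigma^2)$ and $4B_f\le 2\Delta^2\Lambda_n/\sigma^2$, which yields \eqref{eq:gaussian-ldp-mean-fixed}. For $\varepsilon\le1$ I use $\kappa(\varepsilon)\le4(e^\varepsilon-1)^2$ and $e^\varepsilon-1\le 2\varepsilon$, so $\kappa(\varepsilon_i)\le16\varepsilon_i^2$; substituting gives the constant $32$ in \eqref{eq:gaussian-ldp-mean-fixed-simple}.

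\textbf{Optimization.} When $\Lambda_n>0$ and $\delta<1/8$, the choice of $\Delta^2$ gives $\exp(-2\Delta^2\Lambda_n/\sigma^2)\ge 8\delta$. The second entry of the minimum is then $\ge 1/2$, and the bound becomes $\Delta^2/2$, which is \eqref{eq:gaussian-ldp-mean-optimized}. The rate \eqref{eq:gaussian-ldp-mean-rate} follows from $\Lambda_n\le16\sum_i\varepsilon_i^2$ together with $\log\frac{1}{8\delta}\gtrsim\log\frac1\delta$ in the stated range. Finally, $\kappa(\varepsilon)=0$ exactly when $\varepsilon=0$, so $\Lambda_n=0$ forces every $\varepsilon_i=0$. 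The fixed-prior bound with $\Delta=R$ then has exponent $0$, which gives the last display.
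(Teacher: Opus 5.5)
Your proposal is correct and matches the paper's proof step for step: the same reference $Q$ (the released law under $-\Delta$), the same bound $b_s(Q)\ge 1/4$ for $s\le\Delta^2/2$ with the two-range treatment of $s$, Pinsker plus the contraction to get $B_f\le\Delta^2\Lambda_n/(2\sigma^2)$, and the same choice of $\Delta$. Note that $b_s(Q)\ge 1/4$ on $s\le\Delta^2/2$ needs the pointwise fact $\max\{\ell_\Delta(+\Delta,z),\ell_\Delta(-\Delta,z)\}=\Delta^2$, since the sum bound \eqref{eq:gaussian-mean-separation} alone only covers $s\le\Delta^2/4$; also, your claim $\log\frac{1}{8\delta}\gtrsim\log\frac1\delta$ holds with a universal constant only for $\delta$ bounded away from $1/8$ (e.g.\ $\delta\le 1/64$), a looseness the paper's own rate statement shares.
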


\begin{remark}[Effect of the LDP constraint]
Theorem~\ref{thm:gaussian-mean} uses the same two-point Gaussian family and loss. The only change in the LDP proof is the bound on the transcript divergence. Without an LDP constraint the relevant reverse-KL divergence is of order $n\Delta^2/\sigma^2$, whereas under heterogeneous LDP it is controlled by $\Delta^2\Lambda_n/\sigma^2$. For small privacy parameters, $\Lambda_n\le16\sum_i\varepsilon_i^2$.
\end{remark}

\begin{remark}[Unbounded Gaussian observations]
The Gaussian observations are not assumed to be bounded. The LDP contraction in \eqref{eq:ldp-contraction} applies to arbitrary input distributions through their total variation distance. The loss is bounded by the truncation in \eqref{eq:gaussian-mean-loss}; no bounded-support assumption on the Gaussian observation is used. The mean-estimation mechanism is noninteractive across users, as in the heterogeneous-LDP formulation considered for comparison below.
\end{remark}

\begin{remark}[Comparison with bounded mean estimation]
The bounded mean-estimation result at the end of this section uses a different observation model. We include it because it matches the boundedness and heterogeneous-LDP assumptions used by Aliakbarpour et al.~\cite{aliakbarpour2025highprobability}, allowing a comparison of the dependence on $1-\alpha$ and the privacy parameters. 
\end{remark}

\subsection{Gaussian K-armed bandits under LDP}

Consider the Gaussian bandit family $\{M_j^g:j\in[K]\}$ from Section~\ref{sec:gaussian}, with prior $\mu_g^K$. The raw reward distribution of the selected arm is passed through an $\varepsilon_t$-LDP channel before it is revealed to the learner. Define
\begin{equation}
    \Lambda_T
    \defeq
    \sum_{t=1}^T\kappa(\varepsilon_t),
    \qquad
    \kappa(\varepsilon)
    \defeq
    \min\{4,e^{2\varepsilon}\}(e^\varepsilon-1)^2.
    \label{eq:gaussian-bandit-ldp-lambda}
\end{equation}

We next apply the same argument to the Gaussian bandit family from
Section~\ref{sec:gaussian}. The following theorem lower bounds the
Bayesian CVaR of cumulative pseudo-regret when the reward observed at
round $t$ is released through an $\varepsilon_t$-LDP channel. The
privacy dependence enters through the cumulative contraction quantity
$\Lambda_T$.

\begin{theorem}[Gaussian $K$-armed bandit under LDP]
\label{thm:gaussian-ldp-bandit}
For every $g\in(0,1]$ and every $\alpha\in(0,1)$,
\begin{equation}
    \inf_{\Alg\in\Dcal_{T,\epsvec}^{\mathrm{LDP}}}
    \CVaR_\alpha^{\mu_g^K}(R_T)
    \ge
    Tg
    \min\left\{
        \frac{K-1}{2K},
        \frac{1}{16(1-\alpha)}
        \exp\left(-\frac{g^2\Lambda_T}{K\sigma^2}\right)
    \right\}.
    \label{eq:gaussian-ldp-bandit-fixed}
\end{equation}
If $\varepsilon_t\le1$ for every $t$, then
\begin{equation}
    \inf_{\Alg\in\Dcal_{T,\epsvec}^{\mathrm{LDP}}}
    \CVaR_\alpha^{\mu_g^K}(R_T)
    \ge
    Tg
    \min\left\{
        \frac{K-1}{2K},
        \frac{1}{16(1-\alpha)}
        \exp\left(-\frac{16g^2}{K\sigma^2}
        \sum_{t=1}^T\varepsilon_t^2\right)
    \right\}.
    \label{eq:gaussian-ldp-bandit-simple}
\end{equation}
Assume $\Lambda_T>0$. For
$\delta:=1-\alpha<K/[8(K-1)]$, choose
\begin{equation}
    g^2
    \defeq
    \min\left\{
        1,
        \frac{K\sigma^2}{\Lambda_T}
        \log\frac{K}{8(K-1)\delta}
    \right\}.
    \label{eq:gaussian-ldp-bandit-gap}
\end{equation}
Then, for the corresponding fixed prior,
\begin{equation}
\begin{aligned}
    &\inf_{\Alg\in\Dcal_{T,\epsvec}^{\mathrm{LDP}}}
    \CVaR_\alpha^{\mu_g^K}(R_T)\\
    &\qquad\ge
    \frac{K-1}{2K}T
    \min\left\{
        1,
        \frac{\sigma\sqrt K}{\sqrt{\Lambda_T}}
        \sqrt{\log\frac{K}{8(K-1)(1-\alpha)}}
    \right\}.
\end{aligned}
\label{eq:gaussian-ldp-bandit-optimized}
\end{equation}
For $K=2$ and homogeneous $0<\epsilon_t=\epsilon\le1$,
\begin{equation}
    \inf_{\Alg\in\Dcal_{T,\epsvec}^{\mathrm{LDP}}}
    \CVaR_\alpha^{\mu_g^2}(R_T)
    \gtrsim
        \min\left\{
            T,
            \frac{\sigma}{\varepsilon}
            \sqrt{T\log\frac{1}{1-\alpha}}
        \right\}
    \label{eq:gaussian-ldp-two-rate}
\end{equation}
More generally, for homogeneous $0<\epsilon\le1$ and fixed $K$,
\begin{equation}
    \inf_{\Alg\in\Dcal_{T,\epsvec}^{\mathrm{LDP}}}
    \CVaR_\alpha^{\mu_g^K}(R_T)
    \gtrsim
        \min\left\{
            T,
            \frac{\sigma}{\varepsilon}
            \sqrt{KT\log\frac{1}{1-\alpha}}
        \right\}
    \label{eq:gaussian-ldp-k-rate}
\end{equation}
If $\Lambda_T=0$, then $\epsilon_t=0$ for every $t$. In this case,
the fixed-prior bound~(84), with $g=1$, gives
\[
\inf_{\mathrm{ALG}\in\mathcal D_{T,\epsilon}^{\mathrm{LDP}}}
\operatorname{CVaR}_{\alpha}^{\mu_1^K}(R_T)
\ge
T\min\left\{\frac{K-1}{2K},
\frac{1}{16(1-\alpha)}\right\}.
\]
\end{theorem}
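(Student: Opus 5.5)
We follow the template of Theorem~\ref{thm:gaussian-k-bandit}, with the transcript divergence now controlled by Proposition~\ref{prop:ldp-transcript}.

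\textbf{Bounded proxy and reference.} Since $0\le R_T\le Tg$, we take $L_{\max}=Tg$ and apply Corollary~\ref{cor:general-cvar} with reverse KL, $f(u)=-\log u$. The reference $Q$ is the released-transcript law of the same LDP procedure under the null model $M_0$, in which every arm is $\mathcal N(0,\sigma^2)$. Under $Q$ the arm counts $N_j(T)$ do not depend on $j$, and $\sum_j N_j(T)=T$. Hence
\[
\E_{\mu,Q}[R_T]=\frac{g}{K}\sum_j \E_Q\bigl[T-N_j(T)\bigr]=Tg\,\frac{K-1}{K}.
\]
For the excess term we use $(x-s)_+\ge x-s$. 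This gives $b_s(Q)\ge \frac{K-1}{K}-\frac{s}{Tg}$, so $b_s(Q)\ge 1/4$ whenever $s\le s_0:=Tg\bigl(\frac{K-1}{K}-\frac14\bigr)$.

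\textbf{Divergence.} Here $B_f=\frac1K\sum_j\KL(Q\|\mathbb P^{M_j,\Alg}_{\mathrm{LDP}})$. We apply Proposition~\ref{prop:ldp-transcript} with $M=M_0$ and $M'=M_j$. The raw TV distance $d_t$ vanishes unless $A_t=j$, and in that case it equals $\TV(\mathcal N(0,\sigma^2),\mathcal N(g,\sigma^2))$. Pinsker's inequality gives $d_t^2\le g^2/(4\sigma^2)$ (the KL between the two Gaussians is $g^2/(2\sigma^2)$). Summing over $j$ and using $\sum_j\1\{A_t=j\}=1$ yields
\[
B_f\le \frac{g^2}{4K\sigma^2}\sum_t\kappa(\varepsilon_t)=\frac{g^2\Lambda_T}{4K\sigma^2}.
\]
Lemma~\ref{lem:reverse-kl} then gives $a_f^-\ge\frac1{16}e^{-4B_f}\ge\frac1{16}\exp(-g^2\Lambda_T/(K\sigma^2))$, which matches the exponent in \eqref{eq:gaussian-ldp-bandit-fixed}.

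\textbf{Minimizing over $s$.} We split the infimum in \eqref{eq:general-cvar}. For $s\le s_0$ the objective is at least $s+\frac{Tg}{16(1-\alpha)}e^{-g^2\Lambda_T/(K\sigma^2)}\ge \frac{Tg}{16(1-\alpha)}e^{-\cdots}$. For $s>s_0$ the objective is at least $s_0$. This is the step we expect to be the main obstacle: the constant $\frac{K-1}{2K}$ in the statement does not follow from $s_0$ for $K=2$, where $s_0=Tg/4=Tg\cdot\frac{K-1}{2K}$ only coincides by luck, and it fails for general $K$.

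To fix this, we first try a threshold-dependent $\beta$, requiring $b_s\ge1/4$ only for $s\le Tg\frac{K-1}{2K}$. We then compare against the direct expected-loss bound, since $\CVaR_\alpha\ge\E[R_T]$ and the Bayesian expected regret is controlled via Pinsker. Failing that, we argue as in the proof of Theorem~\ref{thm:gaussian-k-bandit}, which the statement mirrors, and reuse its case analysis with $4B_f$ replaced.

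\textbf{Specializations.} For $\varepsilon_t\le1$ we use $\kappa(\varepsilon_t)\le 4(e-1)^2\varepsilon_t^2\le16\varepsilon_t^2$, obtained from $(e^\varepsilon-1)\le(e-1)\varepsilon$. This gives \eqref{eq:gaussian-ldp-bandit-simple}.

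For the optimized bound, the choice \eqref{eq:gaussian-ldp-bandit-gap} makes $\frac{1}{16\delta}e^{-g^2\Lambda_T/(K\sigma^2)}\ge \frac{K-1}{2K}$ when the logarithmic branch is active. When $g=1$, this quantity is larger still. Plugging in $Tg$ yields \eqref{eq:gaussian-ldp-bandit-optimized}.

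For homogeneous $\varepsilon$, the bound $\Lambda_T\le16T\varepsilon^2$ gives \eqref{eq:gaussian-ldp-k-rate} and its $K=2$ case. For $\Lambda_T=0$, $\kappa(\varepsilon_t)=0$ forces $\varepsilon_t=0$, and we set $g=1$ in the fixed bound.
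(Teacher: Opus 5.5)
This is the paper's argument step for step: the null-model released transcript as $Q$, Proposition~\ref{prop:ldp-transcript} with Pinsker and $\sum_j \1\{A_t=j\}=1$ giving $B_f\le g^2\Lambda_T/(4K\sigma^2)$, Lemma~\ref{lem:reverse-kl}, and a two-range split in $s$. It is already complete as you wrote it. The ``main obstacle'' does not exist, because $s_0/(Tg)=\frac{3K-4}{4K}=\frac{K-1}{2K}+\frac{K-2}{4K}\ge\frac{K-1}{2K}$ for every $K\ge2$. So your bound $\min\{s_0,\frac{Tg}{16(1-\alpha)}e^{-g^2\Lambda_T/(K\sigma^2)}\}$ already dominates the stated minimum; a larger cutoff only strengthens the result.

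The paper uses your first fallback: it cuts at $s\le\frac{K-1}{2K}Tg$, where $b_s(Q)\ge\frac{K-1}{2K}\ge\frac14$. The other fallbacks, in particular the expected-regret comparison, are unnecessary and can be dropped.
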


\begin{remark}[Effect of the LDP constraint on the bandit bound]
Theorem~\ref{thm:gaussian-k-bandit} and Theorem~\ref{thm:gaussian-ldp-bandit} use the same Gaussian environments, prior, and pseudo-regret. Under LDP, the transcript divergence is bounded after the reward distributions are passed through the LDP channels. This replaces the usual dependence on $T$ in the information term by the cumulative privacy quantity $\Lambda_T$. For homogeneous small $\varepsilon$, this gives the factor $1/\varepsilon$ in the resulting lower bound. The rates involving $\sum_i\epsilon_t^2$ or $1/\epsilon$ apply only in the small-privacy regime and as $\epsilon\to\infty$, the LDP contraction bound becomes vacuous, and the results of Section~VI apply.
\end{remark}

\begin{remark}[Adaptive action selection]
The factor $1/K$ in the divergence budget comes from averaging over the $K$ possible optimal arms. At each round the algorithm selects one arm, so only one model in the hard family differs from the null reference on the observed reward distribution. The calculation uses the conditional LDP contraction followed by the adaptive KL chain rule.
\end{remark}

\begin{remark}[Comparison with LDP expected-regret bounds]
The factor $\sigma\sqrt{KT}/\varepsilon$ without the CVaR term is consistent with the usual statistical cost of LDP in stochastic bandits. The new feature of the present CVaR statement is the explicit factor $\sqrt{\log(1/(1-\alpha))}$.
\end{remark}

\subsection{Bounded mean estimation for comparison with heterogeneous LDP}
\label{subsec:bounded-comparison}

The next theorem uses a bounded observation model in order to compare directly with the heterogeneous-LDP mean-estimation problem studied by Aliakbarpour et al.~\cite{aliakbarpour2025highprobability}. 

Let $X_i\in[-1,1]$ and consider the two distributions
\begin{align}
    P_{+,\Delta}(X=1)&=\frac{1+\Delta}{2},
    &P_{+,\Delta}(X=-1)&=\frac{1-\Delta}{2},\\
    P_{-,\Delta}(X=1)&=\frac{1-\Delta}{2},
    &P_{-,\Delta}(X=-1)&=\frac{1+\Delta}{2},
\end{align}
whose means are $\pm\Delta$. Let
\begin{equation}
    \mu_\Delta^{\mathrm{bd}}
    =
    \frac12\delta_{P_{+,\Delta}}+
    \frac12\delta_{P_{-,\Delta}}.
\end{equation}

The following theorem applies the same Bayesian CVaR argument to this
bounded two-point model. Its purpose is to make the dependence on
$\alpha$ and the heterogeneous privacy parameters directly comparable
with existing high-probability results for the same observation model.

\begin{theorem}[Bounded mean estimation under heterogeneous LDP]
\label{thm:bounded-ldp-mean}
Let $0<\varepsilon_i\le1$. For every $\Delta\in(0,1]$,
\begin{equation}
\begin{aligned}
    &\inf_{\substack{\widehat\theta,\,W_1,\ldots,W_n\\
        W_i\text{ is }\varepsilon_i\text{-LDP for all }i}}
    \CVaR_\alpha^{\mu_\Delta^{\mathrm{bd}}}
    \left((\widehat\theta-\theta)^2\right)\\
    &\qquad\ge
    \Delta^2
    \min\left\{
        \frac12,
        \frac{1}{16(1-\alpha)}
        \exp\left(-32\Delta^2
        \sum_{i=1}^n\varepsilon_i^2\right)
    \right\}.
\end{aligned}
\label{eq:bounded-ldp-fixed}
\end{equation}
For $1-\alpha<1/8$, there is a prior in this two-point family for which
\begin{equation}
    \inf_{\Alg\in\Dcal_{\epsvec}^{\mathrm{LDP}}}
    \CVaR_\alpha^{\mu_\Delta^{\mathrm{bd}}}
    \left((\widehat\theta-\theta)^2\right)
    \ge
    \frac12
    \min\left\{
        1,
        \frac{\log\frac{1}{8(1-\alpha)}}
        {32\sum_{i=1}^n\varepsilon_i^2}
    \right\}.
    \label{eq:bounded-ldp-rate}
\end{equation}
\end{theorem}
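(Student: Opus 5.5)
We follow the same route as in Theorem~\ref{thm:gaussian-ldp-mean}; only the total-variation computation changes. First we replace the squared error by the clipped proxy $\ell_\Delta=\min\{(\widehat\theta-\theta)^2,\Delta^2\}$. This proxy satisfies $0\le\ell_\Delta\le\Delta^2$ and the separation property \eqref{eq:gaussian-mean-separation}, since the two hypotheses have means $\pm\Delta$. By \eqref{eq:method-monotone}, it suffices to lower bound the CVaR of $\ell_\Delta$ with $L_{\max}=\Delta^2$.

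\textbf{Reference and excess loss.} We take the reference $Q$ to be the released-transcript law under the midpoint model. Concretely, each user's raw $X_i$ is drawn from the mixture $\frac12P_{+,\Delta}+\frac12P_{-,\Delta}$, which is uniform on $\{\pm1\}$, and is passed through the same channels $W_i$. Under $Q$ the estimator is independent of $M$. Averaging \eqref{eq:gaussian-mean-separation} over the uniform prior then gives $\E_{\mu,Q}[\ell_\Delta]\ge\Delta^2/2$. Hence $b_s(Q)\ge\frac12-s/\Delta^2$, because $(\ell-s)_+\ge \ell-s$. For $s\le\Delta^2/4$ this gives $b_s\ge1/4$, and Lemma~\ref{lem:reverse-kl} yields $a_f^-\ge\frac1{16}e^{-4B}$. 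For $s>\Delta^2/4$ we simply keep the term $s$ in \eqref{eq:general-cvar}. Splitting the infimum at $\Delta^2/4$ then produces the form $\Delta^2\min\{\cdot,\cdot\}$. The constant $\tfrac12$ should come out of the same bookkeeping used for Theorem~\ref{thm:gaussian-mean}, which we reuse verbatim.

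\textbf{Divergence.} With $f=-\log$, the average divergence is $B=\E_M\KL(Q\|\mathbb P^{M}_{\mathrm{LDP}})$. The mechanism is noninteractive across users, so the released law factorizes over users. Proposition~\ref{prop:ldp-transcript}, or just the chain rule for product measures, gives $B\le\sum_i\KL(\widetilde Q_i\|\widetilde P_{\pm,i})$. We bound each one-user term by the symmetrized contraction \eqref{eq:ldp-contraction}:
\[
\KL(\widetilde Q_i\|\widetilde P_{\pm,i})\le\kappa(\varepsilon_i)\,\TV\bigl(\tfrac12P_++\tfrac12P_-,\,P_{\pm}\bigr)^2 .
\]
The total-variation distance here is $\Delta/2$, and $\kappa(\varepsilon_i)\le 16\varepsilon_i^2$ for $\varepsilon_i\le1$. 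Together these give $B\le 4\Delta^2\sum_i\varepsilon_i^2$, so $4B\le16\Delta^2\sum_i\varepsilon_i^2$. This is within the stated $32\Delta^2\sum_i\varepsilon_i^2$, and the looser constant follows a fortiori. The bound holds uniformly over estimators and channels, and substituting it gives \eqref{eq:bounded-ldp-fixed}.

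\textbf{Optimization and main obstacle.} For \eqref{eq:bounded-ldp-rate} we choose
\[
\Delta^2=\min\Bigl\{1,\ \log\tfrac{1}{8(1-\alpha)}\big/\bigl(32\textstyle\sum_i\varepsilon_i^2\bigr)\Bigr\}.
\]
This choice makes $\frac{1}{16(1-\alpha)}e^{-32\Delta^2\sum_i\varepsilon_i^2}\ge\frac12$, so the minimum equals $\frac12$ and the bound becomes $\frac12\Delta^2$. The main point needing care is the divergence step. We must ensure that the contraction is applied with the correct orientation (reverse KL, reference in the first argument), which the symmetrized inequality covers. We must also check that the midpoint reference is a valid product of released marginals under each channel $W_i$, so that the chain rule applies user by user.
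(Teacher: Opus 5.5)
\textbf{Gap: the constant $\tfrac12$.} Your excess-loss step does not give the constant $\tfrac12$ that the statement asserts. You combine the sum-separation $\ell_\Delta(+\Delta,z)+\ell_\Delta(-\Delta,z)\ge\Delta^2$ with the linearization $(\ell-s)_+\ge\ell-s$. This only gives $b_s(Q)\ge\tfrac12-s/\Delta^2$, so $b_s(Q)\ge\tfrac14$ holds only for $s\le\Delta^2/4$. Splitting the infimum at $\Delta^2/4$ then yields $\Delta^2\min\{\tfrac14,\tfrac{1}{16(1-\alpha)}e^{-4B}\}$, not $\Delta^2\min\{\tfrac12,\cdot\}$.

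The sum inequality alone cannot do better. It is satisfied by $\ell_+=\ell_-=\Delta^2/2$, for which the hinge sum vanishes at $s=\Delta^2/2$. Your sharper exponent does not compensate either. Whenever the exponential term is at least $\tfrac12$, the claimed bound is $\Delta^2/2$ while yours is $\Delta^2/4$. At your chosen $\Delta$ that term is at least $\tfrac12$, so the halving carries over to the optimized display as well.

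The bookkeeping of Theorem~\ref{thm:gaussian-mean} that you appeal to is not what you carried out. It uses the stronger pointwise fact $\max\{\ell_+(z),\ell_-(z)\}=\Delta^2$ for every $z$, which holds because $|z-\Delta|+|z+\Delta|\ge2\Delta$ and the loss is clipped at $\Delta^2$. This gives $(\ell_+(z)-s)_++(\ell_-(z)-s)_+\ge\Delta^2-s$, hence $b_s(Q)\ge\tfrac12(1-s/\Delta^2)\ge\tfrac14$ for all $s\le\Delta^2/2$. Splitting at $\Delta^2/2$ then produces the $\min\{\tfrac12,\cdot\}$ form. The argument works for any $Q$ independent of $M$, including your midpoint reference, so the repair is local.

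\textbf{Comparison of the reference choice.} With that repair your argument is correct. It differs from the paper's in the reference distribution.

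The paper takes $Q$ to be the released law under $P_{-,\Delta}$. Then one of the two terms of $B_f$ vanishes, and the other is bounded per user by $\kappa(\varepsilon_i)\TV(P_{+,\Delta},P_{-,\Delta})^2=\kappa(\varepsilon_i)\Delta^2$. This gives $B_f\le\frac{\Delta^2}{2}\sum_i\kappa(\varepsilon_i)$.

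Your midpoint reference draws each raw $X_i$ uniformly on $\{\pm1\}$ and passes it through the same channels. This halves the raw total variation to $\Delta/2$ for both hypotheses. Hence $B\le\frac{\Delta^2}{4}\sum_i\kappa(\varepsilon_i)\le4\Delta^2\sum_i\varepsilon_i^2$, and the exponent improves from $32$ to $16$.

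Your worry about product structure is also unnecessary. Applying Proposition~\ref{prop:ldp-transcript} with the midpoint model as the first argument works for sequentially interactive channels too, since every raw law involved is history-independent.

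The paper's choice is slightly simpler, while yours is slightly sharper, and the stated constant follows from it a fortiori. Your optimization of $\Delta$ is then fine.
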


\begin{remark}[Comparison with Aliakbarpour et al.]
Aliakbarpour et al.~\cite{aliakbarpour2025highprobability} study bounded mean estimation under heterogeneous LDP and obtain matching high-probability minimax upper and lower bounds of order
\begin{equation}
    \min\left\{
        1,
        \frac{\log(1/\beta)}{\sum_i\varepsilon_i^2}
    \right\}
\end{equation}
for the squared-error criterion when $\varepsilon_i\le1$. Setting $\beta=1-\alpha$ gives the same dependence as \eqref{eq:bounded-ldp-rate}, up to universal constants. 
\end{remark}

\begin{remark}[Role of the bounded comparison]
The main applications of this paper use Gaussian observations. The bounded mean-estimation proposition is included separately because it permits a direct comparison with the heterogeneous-LDP minimax quantile result of \cite{aliakbarpour2025highprobability}.
\end{remark}

\subsection{Summary of the explicit bounds}

Table~\ref{tab:explicit-bounds} summarizes the bounds derived and the bounded mean-estimation comparison. In the table, $\delta=1-\alpha$, $R$ is the bound on the Gaussian mean parameter in the estimation problems, and $\sigma^2$ is the common observation variance. For the LDP rows, $\varepsilon_i$ or $\varepsilon_t$ denotes the corresponding privacy parameter.

\begin{table}[H]
\centering
\caption{Explicit Bayesian CVaR lower bounds derived in this paper. Each bound is first stated for a fixed prior; the displayed rate is obtained by choosing the separation parameter afterward. Universal numerical constants are suppressed in the rate column.}
\label{tab:explicit-bounds}
\renewcommand{\arraystretch}{1.2}
\begin{tabularx}{\columnwidth}{>{\raggedright\arraybackslash}p{2.25cm}>{\raggedright\arraybackslash}p{2.05cm}X}
\toprule
\textbf{Problem} & \textbf{Setting} & \textbf{CVaR lower-bound dependence} \\
\midrule
Gaussian mean estimation & $X_i\sim\mathcal N(\theta,\sigma^2)$, two-point prior & $\displaystyle \Omega\!\left(\min\left\{R^2,\frac{\sigma^2\log(1/\delta)}{n}\right\}\right)$ \\
$K$-armed Gaussian bandit & Gaussian rewards, uniform $K$-point prior & $\displaystyle \Omega\!\left(\min\left\{T,\sigma\sqrt{KT\log(1/\delta)}\right\}\right)$ \\
Gaussian mean estimation under heterogeneous LDP & $\varepsilon_i$-LDP, two-point prior & $\displaystyle \Omega\!\left(\min\left\{R^2,\frac{\sigma^2\log(1/\delta)}{\sum_i\varepsilon_i^2}\right\}\right)$ \\
Gaussian $K$-armed bandit under LDP & $\varepsilon_t$-LDP, uniform $K$-point prior & $\displaystyle \Omega\!\left(\min\left\{T,\frac{\sigma}{\varepsilon}\sqrt{KT\log(1/\delta)}\right\}\right)$ for $\varepsilon_t=\varepsilon$ \\
Bounded mean estimation under heterogeneous LDP & $X_i\in[-1,1]$, two-point prior & $\displaystyle \Omega\!\left(\min\left\{1,\frac{\log(1/\delta)}{\sum_i\varepsilon_i^2}\right\}\right)$ \\
\bottomrule
\end{tabularx}
\end{table}

\begin{remark}[Gaussian and bounded mean estimation under LDP]
The main mean-estimation theorem uses Gaussian observations, while the bounded mean-estimation theorem is included for direct comparison with Aliakbarpour et al.~\cite{aliakbarpour2025highprobability}. The latter uses the same bounded observation model and heterogeneous privacy parameters as that work. The Gaussian theorem shows that the LDP contraction step does not require the observations themselves to have bounded support; the squared-error loss is bounded in the proof by truncation.
\end{remark}

%% file: sections/09_related_work.tex
\section{Relation to Prior Work}
\label{sec:related-work}
This section places the results of the paper relative to the main
information-theoretic lower-bound frameworks that are closest to our
setting. We distinguish the present fixed-prior Bayesian CVaR criterion
from interactive minimax lower bounds, Bayesian expected-risk bounds,
minimax quantile bounds, and lower bounds under local differential
privacy. We then discuss the literature on private estimation and
bandits and clarify how CVaR is used in the bandit setting.

Table \ref{tab:related-work}
compares the present results with the closest information-theoretic lower-bound frameworks. The two preliminary conference versions of parts of this article are not treated as external prior work; their relationship to this journal article is stated in the Introduction.

\begin{table}[t]
\centering
\caption{Comparison with related lower-bound results. Fixed-prior Bayesian CVaR refers to a criterion of the form $\inf_{\Alg}\CVaR_\alpha^{\mu,\Alg}(L)$ for a specified prior $\mu$.}
\label{tab:related-work}
\renewcommand{\arraystretch}{1.15}
\begin{tabularx}{\columnwidth}{>{\raggedright\arraybackslash}p{2.35cm}>{\raggedright\arraybackslash}p{2.2cm}X>{\raggedright\arraybackslash}p{1.9cm}}
\toprule
\textbf{Work} & \textbf{Setting} & \textbf{Lower-bound criterion} & \textbf{Privacy} \\
\midrule
Foster et al.~\cite{foster2021statistical,foster2023tight} & Interactive decision making & DEC-based minimax risk and regret characterizations & None \\
Chen et al.~\cite{chen2024assouad} & Interactive decision making & Interactive Fano; quantile PAC-DEC and quantile regret-DEC & None \\
Esposito et al.~\cite{esposito2024bayesianrisk} & Bayesian estimation & Bayesian expected-risk lower bounds for general information measures & Via contraction \\
Ma et al.~\cite{ma2024highprobability} & Statistical estimation & Explicit minimax quantile lower bounds & None \\
Chen--Rakhlin~\cite{chen2025changing} & Interactive decision making & DEC lower bounds, including quantile-based quantities & LDP \\
Basu et al.; Ren et al.~\cite{basu2020privacy,ren2020bandits} & Stochastic bandits & Expected-regret lower bounds & LDP and DP \\
Aliakbarpour et al.~\cite{aliakbarpour2025highprobability} & Mean estimation & Explicit high-probability minimax rates & Heterogeneous LDP \\
Bongole et al.~\cite{bongole2026quantile} & ISDM & Explicit minimax quantile lower bounds & Mutual-information privacy \\
Tan--Weng~\cite{tan2023cvar} & Bandits & CVaR of regret; algorithmic upper bounds & None \\
\textbf{Present article} & ISDM & \textbf{Fixed-prior Bayesian CVaR lower bounds} & \textbf{LDP} \\
\bottomrule
\end{tabularx}
\end{table}

\subsection{Interactive Fano and DEC}

Foster et al.~\cite{foster2021statistical,foster2023tight} develop the Decision-Estimation Coefficient (DEC) as a general complexity measure for interactive decision making. Chen et al.~\cite{chen2024assouad} develop interactive versions of Fano, Le Cam, and Assouad methods and connect them to DEC-based lower bounds. Their framework also gives quantile versions of DEC, providing high-probability lower bounds for interactive decision problems.

The generalized Fano inequality used here addresses a different question. For a specified prior and bounded transform of the loss, it gives a Bernoulli $f$-divergence inequality directly for the prior-predictive experiment. The hinge transform gives a fixed-prior Bayesian CVaR lower bound. Thus the distinction is between a broad characterization of interactive minimax complexity and a direct lower bound for a specified Bayesian risk functional.

\subsection{Bayesian information-theoretic lower bounds}

Esposito, Vandenbroucque, and Gastpar \cite{esposito2024bayesianrisk} develop Bayesian risk lower bounds using a broad class of information measures, including $\varphi$-divergences, R\'enyi information, and Sibson mutual information. They use data processing and variational arguments to obtain estimator-independent lower bounds and discuss the effect of privatization through strong data-processing inequalities.

The present work also uses an information measure to lower bound Bayesian risk, but the setting and risk criterion are different. We allow the transcript distribution to depend on an adaptive decision rule and apply a randomized binary reduction to a bounded transform of the loss. The main risk functional is CVaR, and its level $\alpha$ remains explicit in the final bound.

\subsection{Minimax quantile lower bounds}

Ma, Verchand, and Samworth \cite{ma2024highprobability} develop high-probability variants of Le Cam and Fano lower bounds and introduce minimax quantiles. Chen et al.~\cite{chen2024assouad} develop quantile DEC within the interactive Fano framework. Bongole et al.~\cite{bongole2026quantile} develop explicit minimax quantile lower bounds for ISDM and study mutual-information privacy.

For a fixed random variable $X$,
\begin{equation}
    \CVaR_\alpha(X)\ge\VaR_\alpha(X).
\end{equation}
Therefore, if the same probability law and model are being considered, a quantile lower bound can give a CVaR lower bound. The minimax and fixed-prior Bayesian criteria are nevertheless different because
\begin{equation}
    \inf_{\Alg}\sup_{M\in\Mcal}\VaR_\alpha^{M,\Alg}(L)
\end{equation}
uses the worst model, whereas
\begin{equation}
    \inf_{\Alg}\CVaR_\alpha^{\mu,\Alg}(L)
\end{equation}
uses the specified prior to form the loss distribution before CVaR is evaluated.

\subsection{Decision making under LDP}

Duchi, Jordan, and Wainwright \cite{duchi2018minimax} establish the LDP contraction inequality used in the present work. Chen and Rakhlin \cite{chen2025changing} study interactive decision making under LDP using variants of the DEC and include quantile-based private DEC quantities. Thus the compatibility of adaptive decision making with LDP and the corresponding information contractions are already established.

Our use of these results is to control the divergence term in the generalized Bayesian CVaR inequality. This gives explicit lower bounds in the problem parameters and in $\alpha$, rather than a bound left in terms of a DEC quantity.

\subsection{LDP lower bounds for estimation and bandits}

For mean estimation, Aliakbarpour et al.~\cite{aliakbarpour2025highprobability} study bounded observations under heterogeneous LDP and derive matching high-probability minimax bounds. Joseph et al.~\cite{joseph2019locally} study Gaussian mean estimation under LDP, where the unbounded data domain creates additional difficulties. The Gaussian LDP mean-estimation theorem in this paper uses a two-point argument and the LDP contraction inequality; it is stated without claiming that the resulting bound is a complete characterization of the unbounded Gaussian problem.

For stochastic bandits, Basu, Dimitrakakis, and Tossou \cite{basu2020privacy}, Ren et al.~\cite{ren2020bandits}, and Tao et al.~\cite{tao2022optimal} establish regret lower and upper bounds under privacy constraints, including LDP. These results are primarily expectation-based. The present bandit theorem applies CVaR directly to the cumulative pseudo-regret under a fixed prior and displays the additional dependence on the CVaR level.

\subsection{CVaR in bandit problems}

CVaR has been studied as a risk-sensitive objective in bandits and reinforcement learning. Baudry et al.~\cite{baudry2021thompson} study bandit policies based on risk measures of individual arm distributions. Tan and Weng \cite{tan2023cvar} study the CVaR of the regret random variable itself and derive algorithm-specific CVaR-regret upper bounds.

The present result concerns the information-theoretic lower bound for the latter type of random variable, with an additional Bayesian prior on the bandit environment. In particular, the CVaR is taken after the model has been drawn from the fixed prior and the adaptive transcript has been generated.

%% file: sections/10_discussion.tex
\section{Discussion}
\label{sec:discussion}
This section discusses the interpretation and scope of the preceding
results. We first clarify the role of the fixed prior and the choice of
$f$-divergence, then explain how LDP changes the information term in
the lower bound, and finally discuss the relation between the Gaussian
and bounded examples, the connection with DEC-based results, and the
main limitations of the present analysis.

\subsection{Fixed-prior Bayesian CVaR}

The basic quantity in this paper is
\begin{equation}
    \inf_{\Alg\in\Dcal}
    \CVaR_\alpha^{\mu,\Alg}(L)
\end{equation}
for a specified prior $\mu$. The lower bounds in Sections~\ref{sec:gaussian} and~\ref{sec:private-applications} are first stated for fixed priors such as $\mu_\Delta$ or $\mu_g^K$. Only after these statements are proved do we choose a separation parameter to obtain a rate in $n$, $T$, or $\alpha$.

This distinction is useful when comparing with minimax quantiles. A minimax quantile is unaffected by how much prior probability is assigned to a difficult model. A Bayesian CVaR need not be. For example, if a difficult model has prior mass $\rho$, its contribution to the upper $1-\alpha$ tail can be small when $\rho\ll1-\alpha$. The prior therefore remains part of the statistical problem throughout the Bayesian calculation.

\subsection{Choice of divergence}

The generalized Fano inequality permits different $f$-divergences. The choice affects the Bernoulli inverse and therefore the dependence on $\alpha$. Hellinger distance gives a convenient explicit inverse. Reverse KL gives an exponential inverse as in \eqref{eq:reverse-kl-quarter}, which leads to $\log(1/(1-\alpha))$ after the hard-instance separation is chosen.

The divergence choice does not automatically improve the statistical rate of a problem. Its role is to determine whether the lower-bound calculation retains the desired dependence on the tail probability.

\subsection{Effect of local differential privacy}

Under LDP, the model, prior, and loss need not be changed. The raw observation is passed through an LDP channel and the learner acts on the resulting observation. The LDP contraction changes the bound on the transcript divergence.

For Gaussian mean estimation, the reverse-KL budget without privacy constraints is of order
\begin{equation}
    \frac{n\Delta^2}{\sigma^2},
\end{equation}
whereas the heterogeneous-LDP calculation gives a budget of order
\begin{equation}
    \frac{\Delta^2}{\sigma^2}
    \sum_i\kappa(\varepsilon_i).
\end{equation}
For Gaussian bandits, the same replacement occurs in the adaptive transcript calculation. Thus the statistical model and loss calculation stay the same while the information bound changes.

\subsection{Gaussian and bounded mean estimation under LDP}

The main mean-estimation theorem uses Gaussian observations in order to remain consistent with the Gaussian model used elsewhere in the paper. The observation is unbounded, but the LDP contraction depends on total variation and does not require bounded support.

The bounded mean-estimation theorem is included separately because it gives a direct comparison with the heterogeneous-LDP minimax quantile result of Aliakbarpour et al.~\cite{aliakbarpour2025highprobability}. In that comparison the statistical model, boundedness assumption, and heterogeneous privacy parameters agree, so the two rates can be compared directly.

\subsection{Relation between explicit bounds and DEC}

DEC-based results and the bounds developed here serve different purposes. DEC provides a broad complexity measure for interactive learning and can be used to characterize large families of problems. The present generalized Fano inequality instead gives a direct expression for a specified Bayesian CVaR. Once the reference excess loss and divergence are evaluated for a concrete problem, the dependence on $\alpha$ and the model parameters is explicit.

The LDP bandit result is an example of this distinction. Existing LDP bandit lower bounds establish the cost of privacy for expected regret, while the private DEC framework provides a general high-probability lower-bound formulation. The theorem here evaluates the corresponding information constraint directly in a standard Gaussian bandit and obtains an explicit CVaR lower bound.

\subsection{Limitations and extensions}

The privacy results use pure LDP and the KL contraction in \eqref{eq:ldp-contraction}. Extending the analysis to approximate LDP or other privacy definitions requires the corresponding divergence inequalities. The present work gives lower bounds and does not establish matching Bayesian CVaR upper bounds.

The fixed priors used for the explicit calculations are finite. The generalized Fano and CVaR inequalities themselves allow arbitrary priors. For nonuniform or continuous priors, useful bounds will require direct control of the prior-weighted excess loss and divergence terms. This may lead to bounds that depend explicitly on prior mass near difficult subsets of the model class.

%% file: sections/11_conclusion.tex
\section{Conclusion}
\label{sec:conclusion}

We developed an information-theoretic lower-bound method for the Bayesian CVaR of losses in interactive statistical decision making. The generalized Fano inequality reduces a bounded function of the loss to a Bernoulli $f$-divergence. A bounded hinge function then gives a direct lower bound on the CVaR under a specified prior.

We used this result to obtain explicit bounds for Gaussian mean estimation and Gaussian stochastic bandits. 

We also incorporated local differential privacy at the observation stage of the ISDM model. Established LDP contraction inequalities, combined with the KL chain rule in adaptive problems, give the divergence bounds required by the same generalized Fano inequality. This yields explicit Bayesian CVaR bounds for Gaussian mean estimation under heterogeneous LDP and Gaussian stochastic bandits under LDP.

Several questions remain open. It is natural to study nonuniform and continuous priors, where the lower bound may depend explicitly on how the prior places mass on difficult models. Extensions to other privacy definitions and to other reinforcement learning settings are also possible. A further question is to identify statistical problems for which a direct Bayesian CVaR converse yields a substantially sharper statement than a corresponding quantile lower bound.

%% file: appendices/A_generalized_fano.tex
\section{Proof of the Generalized Fano Inequality and CVaR Corollary}
\label{app:generalized-fano}

This appendix proves the two main general results used throughout the
paper. We first prove the generalized Fano inequality by reducing the
transformed loss to a Bernoulli random variable and applying
$f$-divergence data processing. We then specialize the lower side of
the resulting inversion to the hinge transform used in the Bayesian
CVaR corollary.

\begin{proof}[Proof of Theorem~\ref{thm:generalized-fano}]
Fix a prior $\mu$, a procedure $\Alg$, a reference distribution $Q$,
and a measurable function $\phi:\R_+\to[0,1]$. We prove first the
Bernoulli divergence inequality and then its two-sided inversion.

\paragraph*{1) Define the model-dependent and reference experiments}

Define probability measures $P_0$ and $P_1$ on
$\Mcal\times\Hcal_T$ by
\begin{align}
    P_0(dM,dh)
    &\defeq
    \mu(dM)\mathbb P^{M,\Alg}(dh),
    \label{eq:proof-fano-P0}
    \\
    P_1(dM,dh)
    &\defeq
    \mu(dM)Q(dh).
    \label{eq:proof-fano-P1}
\end{align}
Under $P_0$, the model is drawn from the prior and the transcript is
generated according to the model-dependent experiment
$\mathbb P^{M,\Alg}$. Under $P_1$, the model has the same prior
$\mu$, but the transcript is generated from the reference distribution
$Q$, independently of $M$.

\paragraph*{2) Introduce the randomized binary statistic}

Let
\begin{equation}
    U\sim\operatorname{Unif}[0,1]
\end{equation}
be independent of $(M,H_T)$ under both experiments. To include this
auxiliary random variable explicitly, define the product measures
\begin{equation}
    \overline P_0
    \defeq
    P_0\otimes\operatorname{Unif}[0,1],
    \qquad
    \overline P_1
    \defeq
    P_1\otimes\operatorname{Unif}[0,1].
    \label{eq:proof-fano-augmented-measures}
\end{equation}
Now define
\begin{equation}
    V
    \defeq
    \mathbf 1
    \left\{
        U\le
        \phi\bigl(L(M,H_T)\bigr)
    \right\}.
    \label{eq:proof-fano-V}
\end{equation}
Since $\phi$ takes values in $[0,1]$, this construction is well
defined.

\paragraph*{3) Identify the Bernoulli parameters}

Conditioning on $(M,H_T)$ under $\overline P_0$ gives
\begin{align}
    \overline P_0(V=1)
    &=
    \E_{\overline P_0}
    \left[
        \Pr
        \left(
            U\le\phi(L(M,H_T))
            \mid M,H_T
        \right)
    \right]
    \nonumber\\
    &=
    \E_{\overline P_0}
    \left[
        \phi(L(M,H_T))
    \right]
    \nonumber\\
    &=
    \E_{M\sim\mu}
    \E_{H_T\sim\mathbb P^{M,\Alg}}
    \left[
        \phi(L(M,H_T))
    \right]
    \nonumber\\
    &=
    a_\phi.
    \label{eq:proof-fano-a}
\end{align}
Similarly,
\begin{align}
    \overline P_1(V=1)
    &=
    \E_{M\sim\mu}
    \E_{H_T\sim Q}
    \left[
        \phi(L(M,H_T))
    \right]
    \nonumber\\
    &=
    b_{\phi,Q}.
    \label{eq:proof-fano-b}
\end{align}
Therefore,
\begin{equation}
    \Law_{\overline P_0}(V)
    =
    \Bern(a_\phi),
    \qquad
    \Law_{\overline P_1}(V)
    =
    \Bern(b_{\phi,Q}).
    \label{eq:proof-fano-bernoulli-laws}
\end{equation}

\paragraph*{4) Apply data processing}

The mapping
\begin{equation}
    (M,H_T,U)
    \longmapsto
    V
\end{equation}
is a measurable Markov kernel from
$\Mcal\times\Hcal_T\times[0,1]$ to $\{0,1\}$. Hence the
data-processing inequality for $f$-divergences gives
\begin{align}
    D_f
    \left(
        \Bern(a_\phi)
        \middle\|
        \Bern(b_{\phi,Q})
    \right)
    &\le
    D_f
    \left(
        \overline P_0
        \middle\|
        \overline P_1
    \right).
    \label{eq:proof-fano-data-processing}
\end{align}

Since the same independent uniform distribution is appended to both
$P_0$ and $P_1$, tensoring with this common factor does not change the
$f$-divergence. Thus,
\begin{equation}
    D_f
    \left(
        \overline P_0
        \middle\|
        \overline P_1
    \right)
    =
    D_f(P_0\|P_1).
    \label{eq:proof-fano-product-divergence}
\end{equation}

\paragraph*{5) Evaluate the divergence between the two joint experiments}

The model marginal is $\mu$ under both $P_0$ and $P_1$. Conditional
on $M$, the transcript distributions are
$\mathbb P^{M,\Alg}$ and $Q$, respectively. Therefore the
$f$-divergence decomposes as
\begin{align}
    D_f(P_0\|P_1)
    &=
    \E_{M\sim\mu}
    D_f
    \left(
        \mathbb P^{M,\Alg}
        \middle\|
        Q
    \right)
    \nonumber\\
    &=
    B_f(\Alg,Q).
    \label{eq:proof-fano-average-divergence}
\end{align}
Combining
\eqref{eq:proof-fano-data-processing},
\eqref{eq:proof-fano-product-divergence}, and
\eqref{eq:proof-fano-average-divergence} yields
\begin{equation}
    D_f
    \left(
        \Bern(a_\phi)
        \middle\|
        \Bern(b_{\phi,Q})
    \right)
    \le
    B_f(\Alg,Q),
\end{equation}
which proves \eqref{eq:generalized-fano}.

\paragraph*{6) Invert the Bernoulli divergence constraint}

Fix
\begin{equation}
    B
    =
    B_f(\Alg,Q),
    \qquad
    b
    =
    b_{\phi,Q}.
\end{equation}
The inequality just proved implies that
\begin{equation}
    a_\phi
    \in
    \left\{
        a\in[0,1]:
        D_f
        \left(
            \Bern(a)\middle\|\Bern(b)
        \right)
        \le B
    \right\}.
    \label{eq:proof-fano-feasible-set}
\end{equation}
By definition,
\begin{align}
    a_f^-(B;b)
    &=
    \inf
    \left\{
        a\in[0,1]:
        D_f
        \left(
            \Bern(a)\middle\|\Bern(b)
        \right)
        \le B
    \right\},
    \\
    a_f^+(B;b)
    &=
    \sup
    \left\{
        a\in[0,1]:
        D_f
        \left(
            \Bern(a)\middle\|\Bern(b)
        \right)
        \le B
    \right\}.
\end{align}
Since $a_\phi$ belongs to the feasible set in
\eqref{eq:proof-fano-feasible-set}, it follows immediately that
\begin{equation}
    a_f^-
    \left(
        B_f(\Alg,Q);b_{\phi,Q}
    \right)
    \le
    a_\phi
    \le
    a_f^+
    \left(
        B_f(\Alg,Q);b_{\phi,Q}
    \right).
\end{equation}
This proves \eqref{eq:generalized-fano-two-sided}.
\end{proof}

\begin{proof}[Proof of Corollary~\ref{cor:general-cvar}]
Fix a procedure $\Alg$ and a reference distribution $Q$. Assume
\begin{equation}
    0\le L\le L_{\max}.
\end{equation}
For $s\in[0,L_{\max}]$, apply
Theorem~\ref{thm:generalized-fano} with the hinge transform
\begin{equation}
    \phi_s(\ell)
    =
    \frac{(\ell-s)_+}{L_{\max}}.
    \label{eq:proof-cvar-hinge}
\end{equation}

\paragraph*{1) Identify the two Bernoulli parameters}

Under the model-dependent Bayesian experiment,
\begin{align}
    a_{\phi_s}
    &=
    \E_{M\sim\mu}
    \E_{H_T\sim\mathbb P^{M,\Alg}}
    \left[
        \frac{
            (L(M,H_T)-s)_+
        }{
            L_{\max}
        }
    \right]
    \nonumber\\
    &=
    \frac{1}{L_{\max}}
    \E_{\mu,\Alg}
    \left[
        (L-s)_+
    \right].
    \label{eq:proof-cvar-a}
\end{align}
Under the reference experiment,
\begin{align}
    b_{\phi_s,Q}
    &=
    \frac{1}{L_{\max}}
    \E_{M\sim\mu}
    \E_{H_T\sim Q}
    \left[
        (L(M,H_T)-s)_+
    \right]
    \nonumber\\
    &=
    b_s(Q).
    \label{eq:proof-cvar-b}
\end{align}

\paragraph*{2) Use the lower Bernoulli inverse}

The two-sided conclusion of
Theorem~\ref{thm:generalized-fano} gives, in particular,
\begin{equation}
    a_{\phi_s}
    \ge
    a_f^-
    \left(
        B_f(\Alg,Q);b_s(Q)
    \right).
\end{equation}
Substituting \eqref{eq:proof-cvar-a} yields
\begin{equation}
    \E_{\mu,\Alg}
    \left[
        (L-s)_+
    \right]
    \ge
    L_{\max}
    a_f^-
    \left(
        B_f(\Alg,Q);b_s(Q)
    \right).
    \label{eq:proof-cvar-hinge-lower}
\end{equation}

\paragraph*{3) Substitute into the CVaR representation}

By the Rockafellar--Uryasev representation,
\begin{equation}
    \CVaR_\alpha^{\mu,\Alg}(L)
    =
    \inf_{s\in\R}
    \left\{
        s+
        \frac{1}{1-\alpha}
        \E_{\mu,\Alg}
        \left[
            (L-s)_+
        \right]
    \right\}.
    \label{eq:proof-cvar-RU}
\end{equation}
Because $0\le L\le L_{\max}$, the infimum in
\eqref{eq:proof-cvar-RU} may be restricted to
$s\in[0,L_{\max}]$.

Indeed, if $s<0$, then $L-s\ge0$ almost surely, and hence
\begin{align}
    s+
    \frac{1}{1-\alpha}
    \E_{\mu,\Alg}
    \left[
        (L-s)_+
    \right]
    &=
    s+
    \frac{
        \E_{\mu,\Alg}[L]-s
    }{
        1-\alpha
    }
    \nonumber\\
    &=
    \frac{
        \E_{\mu,\Alg}[L]
        -
        \alpha s
    }{
        1-\alpha
    }
    \nonumber\\
    &\ge
    \frac{
        \E_{\mu,\Alg}[L]
    }{
        1-\alpha
    },
\end{align}
which is the value of the same objective at $s=0$.

If $s>L_{\max}$, then $(L-s)_+=0$ almost surely, so the objective is
equal to $s>L_{\max}$, whereas at $s=L_{\max}$ it is equal to
$L_{\max}$. Hence no minimizer lies outside
$[0,L_{\max}]$.

Therefore,
\begin{equation}
    \CVaR_\alpha^{\mu,\Alg}(L)
    =
    \inf_{0\le s\le L_{\max}}
    \left\{
        s+
        \frac{1}{1-\alpha}
        \E_{\mu,\Alg}
        \left[
            (L-s)_+
        \right]
    \right\}.
    \label{eq:proof-cvar-restricted}
\end{equation}
Applying \eqref{eq:proof-cvar-hinge-lower} for every
$s\in[0,L_{\max}]$ gives
\begin{equation}
    \CVaR_\alpha^{\mu,\Alg}(L)
    \ge
    \inf_{0\le s\le L_{\max}}
    \left\{
        s+
        \frac{L_{\max}}{1-\alpha}
        a_f^-
        \left(
            B_f(\Alg,Q);b_s(Q)
        \right)
    \right\}.
\end{equation}
This proves the corollary.
\end{proof}

%% file: appendices/B_divergence.tex
\section{Divergence Calculations}
\label{app:divergence}

This appendix proves the reverse-KL Bernoulli inversion used in the
Gaussian and LDP applications.

\begin{proof}[Proof of Lemma~\ref{lem:reverse-kl}]
For $f(u)=-\log u$,
\begin{equation}
    D_f(\Bern(a)\|\Bern(b))
    =
    \KL(\Bern(b)\|\Bern(a)).
\end{equation}
If the latter is at most $B$, then
\begin{align}
    B
    &\ge
    b\log\frac ba+(1-b)\log\frac{1-b}{1-a}\\
    &=-h(b)-b\log a-(1-b)\log(1-a)\\
    &\ge -h(b)-b\log a.
\end{align}
Thus
\begin{equation}
    -b\log a\le B+h(b),
\end{equation}
which gives \eqref{eq:reverse-kl}. If $b\ge\beta$, then $h(b)\le\log2$ and $1/b\le1/\beta$, so
\begin{equation}
    a\ge\exp\left(-\frac{B+\log2}{\beta}\right).
\end{equation}
For $\beta=1/4$, this is \eqref{eq:reverse-kl-quarter}. 
\end{proof}

%% file: appendices/C_gaussian.tex
\section{Proofs for the Gaussian Applications}
\label{app:gaussian}

This appendix proves the lower bounds of
Section~\ref{sec:gaussian}. We treat Gaussian mean estimation first and
then the Gaussian $K$-armed bandit. In both proofs, the fixed-prior
bound is established before the separation parameter is chosen.

\subsection{Gaussian mean estimation}

\begin{proof}[Proof of Theorem~\ref{thm:gaussian-mean}]
Fix $n\ge1$, $\sigma>0$, and $\alpha\in(0,1)$. We first fix an
arbitrary separation
\begin{equation}
    \Delta\in(0,R]
\end{equation}
and the corresponding prior
\begin{equation}
    \mu_\Delta
    =
    \frac12\delta_{-\Delta}+\frac12\delta_{+\Delta}.
\end{equation}
We also fix an arbitrary estimator $\widehat\theta=\widehat\theta(X^n)$.
The argument below will hold for every estimator.

\paragraph*{1) Define a bounded loss}
Define
\begin{equation}
    \ell_\Delta(\theta,\widehat\theta)
    :=
    \min\left\{
        (\widehat\theta-\theta)^2,\Delta^2
    \right\}.
\end{equation}
Then
\begin{equation}
    0\le \ell_\Delta\le \Delta^2.
\end{equation}
Thus $L_{\max}=\Delta^2$. Since
$\ell_\Delta\le(\widehat\theta-\theta)^2$ pointwise, monotonicity of
CVaR gives
\begin{equation}
    \CVaR_\alpha^{\mu_\Delta}
    \left((\widehat\theta-\theta)^2\right)
    \ge
    \CVaR_\alpha^{\mu_\Delta}(\ell_\Delta).
    \label{eq:gaussian-mean-proof-cvar-monotonicity}
\end{equation}

\paragraph*{2) Lower bound the reference excess loss}
For $z\in\R$, let
\begin{equation}
    \ell_+(z):=\min\{(z-\Delta)^2,\Delta^2\},
    \qquad
    \ell_-(z):=\min\{(z+\Delta)^2,\Delta^2\}.
\end{equation}
Since the two parameter values are separated by $2\Delta$,
\begin{equation}
    \max\{|z-\Delta|,|z+\Delta|\}\ge\Delta
\end{equation}
for every $z$. Hence
\begin{equation}
    \max\{\ell_+(z),\ell_-(z)\}=\Delta^2.
\end{equation}
It follows that, for every $s\in[0,\Delta^2]$,
\begin{equation}
    (\ell_+(z)-s)_++(\ell_-(z)-s)_+
    \ge \Delta^2-s.
    \label{eq:gaussian-mean-proof-separation}
\end{equation}

Take the reference distribution to be
\begin{equation}
    Q:=\mathcal N(-\Delta,\sigma^2)^{\otimes n}.
\end{equation}
Then
\begin{align}
    b_s(Q)
    &
    =\frac{1}{2\Delta^2}
    \E_Q\left[
        (\ell_+(\widehat\theta)-s)_+
        +(\ell_-(\widehat\theta)-s)_+
    \right]
    \nonumber\\
    &\ge
    \frac{\Delta^2-s}{2\Delta^2}
    =
    \frac12\left(1-\frac{s}{\Delta^2}\right).
    \label{eq:gaussian-mean-proof-bs}
\end{align}
Therefore,
\begin{equation}
    0\le s\le\frac{\Delta^2}{2}
    \quad\Longrightarrow\quad
    b_s(Q)\ge\frac14.
\end{equation}

\paragraph*{3) Compute the reverse-KL information budget}
Choose
\begin{equation}
    f(u)=-\log u,
\end{equation}
so that
\begin{equation}
    D_f(P\|Q)=\KL(Q\|P).
\end{equation}
Because $Q$ is the distribution under $-\Delta$,
\begin{align}
    B_f
    &
    =\frac12\KL(Q\|Q)
      +\frac12\KL(Q\|P_+)
    \nonumber\\
    &
    =\frac12\KL(P_-\|P_+),
\end{align}
where
\begin{equation}
    P_-=\mathcal N(-\Delta,\sigma^2)^{\otimes n},
    \qquad
    P_+=\mathcal N(+\Delta,\sigma^2)^{\otimes n}.
\end{equation}
For equal-variance Gaussians,
\begin{equation}
    \KL\left(\mathcal N(m_0,\sigma^2)\middle\|\mathcal N(m_1,\sigma^2)\right)
    =\frac{(m_0-m_1)^2}{2\sigma^2}.
\end{equation}
Thus
\begin{equation}
    \KL\left(\mathcal N(-\Delta,\sigma^2)\middle\|\mathcal N(+\Delta,\sigma^2)\right)
    =\frac{2\Delta^2}{\sigma^2}.
\end{equation}
By additivity over the independent observations,
\begin{equation}
    \KL(P_-\|P_+)=\frac{2n\Delta^2}{\sigma^2}.
\end{equation}
Hence
\begin{equation}
    B_f=\frac{n\Delta^2}{\sigma^2}.
    \label{eq:appendix-gaussian-mean-B}
\end{equation}

\paragraph*{4) Apply the Bernoulli inverse}
For $s\le\Delta^2/2$, Step 2 gives $b_s(Q)\ge1/4$. Lemma~\ref{lem:reverse-kl}
therefore gives
\begin{align}
    a_f^-(B_f;b_s(Q))
    &\ge \frac1{16}e^{-4B_f}
    =\frac1{16}
    \exp\left(-\frac{4n\Delta^2}{\sigma^2}\right).
    \label{eq:appendix-gaussian-mean-a}
\end{align}

\paragraph*{5) Apply the CVaR bound}
Let $\delta:=1-\alpha$. Corollary~\ref{cor:general-cvar} gives
\begin{equation}
    \CVaR_\alpha^{\mu_\Delta}(\ell_\Delta)
    \ge
    \inf_{0\le s\le\Delta^2}
    \left\{
        s+\frac{\Delta^2}{\delta}a_f^-(B_f;b_s(Q))
    \right\}.
\end{equation}
If $s\ge\Delta^2/2$, the expression is at least $\Delta^2/2$. If
$0\le s<\Delta^2/2$, Step 4 gives
\begin{equation}
    s+\frac{\Delta^2}{\delta}a_f^-(B_f;b_s(Q))
    \ge
    \frac{\Delta^2}{16\delta}
    \exp\left(-\frac{4n\Delta^2}{\sigma^2}\right).
\end{equation}
Thus every $s\in[0,\Delta^2]$ satisfies the same lower bound, and therefore
\begin{equation}
    \CVaR_\alpha^{\mu_\Delta}(\ell_\Delta)
    \ge
    \Delta^2\min\left\{
        \frac12,
        \frac{1}{16(1-\alpha)}
        \exp\left(-\frac{4n\Delta^2}{\sigma^2}\right)
    \right\}.
\end{equation}
Combining this with \eqref{eq:gaussian-mean-proof-cvar-monotonicity} and
then taking the infimum over all estimators proves the fixed-prior bound
\eqref{eq:gaussian-mean-fixed} for the arbitrary $\Delta\in(0,R]$.

\paragraph*{6) Choose a hard prior from the family}
Assume $\delta=1-\alpha<1/8$ and set $x:=\Delta^2$. The fixed-prior bound is
\begin{equation}
    x\min\left\{
        \frac12,
        \frac{1}{16\delta}e^{-4nx/\sigma^2}
    \right\},
    \qquad 0<x\le R^2.
\end{equation}
Define
\begin{equation}
    g(x):=\frac{1}{16\delta}e^{-4nx/\sigma^2}.
\end{equation}
Since $g$ is decreasing, there is a unique $x_0>0$ satisfying
\begin{equation}
    g(x_0)=\frac12.
\end{equation}
Solving gives
\begin{equation}
    x_0=\frac{\sigma^2}{4n}\log\frac1{8\delta}.
\end{equation}
Choose
\begin{equation}
    x_{n,\alpha}:=\min\{R^2,x_0\},
    \qquad
    \Delta_{n,\alpha}:=\sqrt{x_{n,\alpha}}.
\end{equation}
This is an admissible separation, so the fixed-prior inequality applies to
$\mu_{\Delta_{n,\alpha}}$.

If $R^2\le x_0$, then $g(R^2)\ge1/2$, and the fixed-prior bound at
$x=R^2$ is $R^2/2$. If $R^2>x_0$, then $x=x_0$ and both terms inside the
minimum equal $1/2$, so the bound is $x_0/2$. Hence
\begin{equation}
    \inf_{\widehat\theta}
    \CVaR_\alpha^{\mu_{\Delta_{n,\alpha}}}
    \left((\widehat\theta-\theta)^2\right)
    \ge
    \frac12
    \min\left\{
        R^2,
        \frac{\sigma^2}{4n}
        \log\frac1{8(1-\alpha)}
    \right\}.
\end{equation}
Equivalently,
\begin{equation}
    \sup_{0<\Delta\le R}
    \inf_{\widehat\theta}
    \CVaR_\alpha^{\mu_\Delta}
    \left((\widehat\theta-\theta)^2\right)
    \ge
    \frac12
    \min\left\{
        R^2,
        \frac{\sigma^2}{4n}
        \log\frac1{8(1-\alpha)}
    \right\}.
\end{equation}
This is the stated hard-prior consequence.
\end{proof}

\subsection{K-armed Gaussian bandits}

\begin{proof}[Proof of Theorem~\ref{thm:gaussian-k-bandit}]
Fix $K\ge2$, $T\ge1$, $\sigma>0$, $\alpha\in(0,1)$, and an arbitrary
$g\in(0,1]$. Fix the uniform prior
\begin{equation}
    \mu_g^K=\frac1K\sum_{j=1}^K\delta_{M_j^g}
\end{equation}
and an arbitrary algorithm $\Alg\in\Dcal_T$. Let $M_0$ be the null model in
which every arm has reward distribution $\mathcal N(0,\sigma^2)$. Let $Q$
be the transcript distribution under $M_0$, and let $\mathbb P_j$ be the
transcript distribution under $M_j^g$.

\paragraph*{1) Define the loss}
Under $M_j^g$,
\begin{equation}
    R_T(M_j^g)=g(T-N_j(T)),
\end{equation}
so
\begin{equation}
    0\le R_T(M_j^g)\le Tg.
\end{equation}
Thus $L_{\max}=Tg$.

\paragraph*{2) Bound the reference excess loss}
For every realized action sequence,
\begin{equation}
    \sum_{j=1}^K N_j(T)=T.
\end{equation}
Hence
\begin{align}
    \frac1K\sum_{j=1}^K R_T(M_j^g)
    &
    =
    \frac{g}{K}\sum_{j=1}^K(T-N_j(T))
    \nonumber\\
    &=
    \frac{K-1}{K}gT.
    \label{eq:proof-kbandit-average-regret}
\end{align}
This is a pointwise identity and does not depend on $Q$.

For $s\in[0,Tg]$, convexity of $x\mapsto(x-s)_+$ gives
\begin{align}
    b_s(Q)
    &\ge
    \frac1{Tg}
    \left(
        \frac{K-1}{K}gT-s
    \right)_+
    \nonumber\\
    &=
    \left(
        \frac{K-1}{K}-\frac{s}{Tg}
    \right)_+.
\end{align}
Therefore,
\begin{equation}
    0\le s\le\frac{K-1}{2K}Tg
    \quad\Longrightarrow\quad
    b_s(Q)\ge\frac{K-1}{2K}\ge\frac14.
\end{equation}

\paragraph*{3) Decompose the adaptive transcript KL}
Let
\begin{equation}
    H_{t-1}:=(A_1,Y_1,\ldots,A_{t-1},Y_{t-1}).
\end{equation}
Conditional on $H_{t-1}=h$, the algorithm selects $A_t$ according to the same
kernel under $M_0$ and $M_j^g$. Thus the action component cancels in the
conditional KL. Given the selected arm, the reward distribution is independent
of the past. Under $M_0$,
\begin{equation}
    P_0(\cdot\mid a)=\mathcal N(0,\sigma^2),
    \qquad a\in[K],
\end{equation}
while under $M_j^g$,
\begin{equation}
    P_j(\cdot\mid a)
    =
    \begin{cases}
        \mathcal N(g,\sigma^2),&a=j,\\
        \mathcal N(0,\sigma^2),&a\ne j.
    \end{cases}
\end{equation}
The chain rule for KL therefore gives
\begin{align}
    \KL(Q\|\mathbb P_j)
    &
    =
    \sum_{t=1}^T
    \E_Q\left[
        \KL\left(
            P_0(\cdot\mid A_t)
            \middle\|
            P_j(\cdot\mid A_t)
        \right)
    \right].
\end{align}
The conditional KL is zero when $A_t\ne j$. When $A_t=j$,
\begin{equation}
    \KL\left(\mathcal N(0,\sigma^2)\middle\|\mathcal N(g,\sigma^2)\right)
    =\frac{g^2}{2\sigma^2}.
\end{equation}
Thus
\begin{align}
    \KL(Q\|\mathbb P_j)
    &=
    \frac{g^2}{2\sigma^2}
    \sum_{t=1}^T\E_Q[\1\{A_t=j\}]
    \\
    &=
    \frac{g^2}{2\sigma^2}\E_Q[N_j(T)].
    \label{eq:proof-kbandit-kl-j}
\end{align}

Averaging over the uniform prior gives
\begin{align}
    B_f
    &:={\frac{1}{K}}\sum_{j=1}^K\KL(Q\|\mathbb P_j)
    \nonumber\\
    &=
    \frac{g^2}{2K\sigma^2}
    \sum_{j=1}^K\E_Q[N_j(T)]
    \nonumber\\
    &=
    \frac{g^2T}{2K\sigma^2},
    \label{eq:appendix-k-B}
\end{align}
where the last equality uses
$\sum_jN_j(T)=T$ pathwise.

\paragraph*{4) Bernoulli inverse}
For $s\le (K-1)Tg/(2K)$, Lemma~\ref{lem:reverse-kl} and
$b_s(Q)\ge1/4$ give
\begin{equation}
    a_f^-(B_f;b_s(Q))
    \ge
    \frac1{16}
    \exp\left(-\frac{2g^2T}{K\sigma^2}\right).
\end{equation}

\paragraph*{5) CVaR bound}
Let $\delta:=1-\alpha$. If
$s\ge (K-1)Tg/(2K)$, the CVaR objective is at least that threshold. If
$s<(K-1)Tg/(2K)$, then
\begin{equation}
    s+\frac{Tg}{\delta}a_f^-(B_f;b_s(Q))
    \ge
    \frac{Tg}{16\delta}
    \exp\left(-\frac{2g^2T}{K\sigma^2}\right).
\end{equation}
Thus
\begin{equation}
    \CVaR_\alpha^{\mu_g^K}(R_T)
    \ge
    Tg
    \min\left\{
        \frac{K-1}{2K},
        \frac{1}{16(1-\alpha)}
        \exp\left(-\frac{2g^2T}{K\sigma^2}\right)
    \right\}.
\end{equation}
Since the algorithm was arbitrary, the fixed-prior inequality follows.

\paragraph*{6) Choose a hard prior from the gap family}
Set $\delta:=1-\alpha$ and
\begin{equation}
    c_K:=\frac{K-1}{2K}.
\end{equation}
The two terms inside the minimum are equal when
\begin{equation}
    c_K
    =
    \frac{1}{16\delta}
    \exp\left(-\frac{2g^2T}{K\sigma^2}\right).
\end{equation}
Thus
\begin{equation}
    g_0^2
    =
    \frac{K\sigma^2}{2T}
    \log\frac{K}{8(K-1)\delta}.
\end{equation}
When
$\delta<K/[8(K-1)]$, the logarithm is positive. Choose
\begin{equation}
    g_{T,\alpha}:=\min\{1,g_0\}.
\end{equation}
If $g_0\le1$, both branches equal $c_K$. If $g_0>1$, monotonicity of the
exponential branch implies that at $g=1$ the second branch is at least $c_K$.
Hence, in either case,
\begin{equation}
    \inf_{\Alg\in\Dcal_T}
    \CVaR_\alpha^{\mu_{g_{T,\alpha}}^K}(R_T)
    \ge
    c_KT
    \min\left\{
        1,
        \sigma\sqrt{\frac{K}{2T}}
        \sqrt{\log\frac{K}{8(K-1)(1-\alpha)}}
    \right\}.
\end{equation}
For fixed $K$, this is of order
\begin{equation}
    \min\left\{
        T,
        \sigma\sqrt{KT\log\frac1{1-\alpha}}
    \right\}.
\end{equation}
\end{proof}

%% file: appendices/D_ldp.tex
\section{Proofs for the LDP Applications}

This appendix proves the transcript-divergence result and the LDP
applications. We first derive the adaptive KL bound for released
transcripts and then apply it to Gaussian mean estimation, Gaussian
$K$-armed bandits, and the bounded mean-estimation comparison.
\label{app:ldp}

\begin{proof}[Proof of Proposition~\ref{prop:ldp-transcript}]
Fix a horizon $T$, two models $M$ and $M'$, and an arbitrary
$\epsvec$-LDP procedure $\Alg$. We prove the transcript divergence bound
by conditioning on the released history at each round.

Let
\begin{equation}
    H_{t-1}^{\mathrm{LDP}}
    :=
    (A_1,Z_1,\ldots,A_{t-1},Z_{t-1})
\end{equation}
be the released history before round $t$. Conditional on
$H_{t-1}^{\mathrm{LDP}}=h$, the algorithm selects $A_t$ according to
an action kernel
\begin{equation}
    \pi_t(da\mid h).
\end{equation}
The same kernel is used under $M$ and $M'$, because the algorithm is the
same in the two experiments.

Given $(A_t,H_{t-1}^{\mathrm{LDP}})=(a,h)$, model $M$ generates the raw
observation according to
\begin{equation}
    P_t^M(dy\mid a,h),
\end{equation}
and the LDP channel produces the released observation according to
\begin{equation}
    W_t(dz\mid y,a,h).
\end{equation}
Therefore the conditional released marginal is
\begin{equation}
    \widetilde P_t^M(dz\mid a,h)
    =
    \int W_t(dz\mid y,a,h)
    P_t^M(dy\mid a,h).
\end{equation}
Define $\widetilde P_t^{M'}$ analogously.

Conditional on $H_{t-1}^{\mathrm{LDP}}=h$, the joint distributions of
$(A_t,Z_t)$ factor as
\begin{align}
    \mathbb P_{\mathrm{LDP}}^{M,\Alg}(da,dz\mid h)
    &
    =
    \pi_t(da\mid h)\widetilde P_t^M(dz\mid a,h),
    \\
    \mathbb P_{\mathrm{LDP}}^{M',\Alg}(da,dz\mid h)
    &
    =
    \pi_t(da\mid h)\widetilde P_t^{M'}(dz\mid a,h).
\end{align}
Since the action kernel is identical, it contributes zero to the
conditional KL divergence. The KL chain rule therefore gives
\begin{align}
    &\KL\left(
        \mathbb P_{\mathrm{LDP}}^{M,\Alg}
        \middle\|
        \mathbb P_{\mathrm{LDP}}^{M',\Alg}
    \right)
    \\[-1mm]
    &\qquad=
    \sum_{t=1}^T
    \E_{\mathbb P_{\mathrm{LDP}}^{M,\Alg}}
    \left[
        \KL\left(
            \widetilde P_t^M(\cdot\mid A_t,H_{t-1}^{\mathrm{LDP}})
            \middle\|
            \widetilde P_t^{M'}(\cdot\mid A_t,H_{t-1}^{\mathrm{LDP}})
        \right)
    \right].
    \label{eq:ldp-proof-chain-rule}
\end{align}

For fixed $t$, $h$, and $a$, the one-direction KL is bounded by the
symmetrized KL. Hence, by \eqref{eq:conditional-ldp-contraction},
\begin{equation}
    \KL\left(
        \widetilde P_t^M(\cdot\mid a,h)
        \middle\|
        \widetilde P_t^{M'}(\cdot\mid a,h)
    \right)
    \le
    \kappa(\varepsilon_t)
    d_{\mathrm{TV}}^2
    \left(
        P_t^M(\cdot\mid a,h),
        P_t^{M'}(\cdot\mid a,h)
    \right).
    \label{eq:ldp-proof-one-direction}
\end{equation}
Substituting \eqref{eq:ldp-proof-one-direction} into
\eqref{eq:ldp-proof-chain-rule} proves
\begin{equation}
    \KL\left(
        \mathbb P_{\mathrm{LDP}}^{M,\Alg}
        \middle\|
        \mathbb P_{\mathrm{LDP}}^{M',\Alg}
    \right)
    \le
    \E_{\mathbb P_{\mathrm{LDP}}^{M,\Alg}}
    \left[
        \sum_{t=1}^T
        \kappa(\varepsilon_t)
        d_t(H_{t-1}^{\mathrm{LDP}},A_t)^2
    \right].
\end{equation}
which is \eqref{eq:adaptive-ldp-kl}.
\end{proof}

\subsection{Gaussian mean estimation under heterogeneous LDP}

\begin{proof}[Proof of Theorem~\ref{thm:gaussian-ldp-mean}]
Fix $n\ge1$, $\sigma>0$, $\alpha\in(0,1)$, and an arbitrary
$\Delta\in(0,R]$. We first prove the fixed-prior statement for
\begin{equation}
    \mu_\Delta
    =
    \frac12\delta_{-\Delta}
    +
    \frac12\delta_{+\Delta}.
\end{equation}
Fix an arbitrary procedure
$\Alg\in\mathcal D_{\boldsymbol\varepsilon}^{\mathrm{LDP}}$.
The choice of $\Delta$ will be made only after the fixed-prior bound has
been established.

\paragraph*{1) Define a bounded loss}
Define
\begin{equation}
    \ell_\Delta(\theta,\widehat\theta)
    :=
    \min\left\{
        (\widehat\theta-\theta)^2,
        \Delta^2
    \right\}.
\end{equation}
Then $0\le\ell_\Delta\le\Delta^2$, so
$L_{\max}=\Delta^2$. Moreover,
\begin{equation}
    \ell_\Delta(\theta,\widehat\theta)
    \le
    (\widehat\theta-\theta)^2.
\end{equation}
Thus CVaR monotonicity gives
\begin{equation}
    \CVaR_\alpha^{\mu_\Delta,\Alg}
    \left((\widehat\theta-\theta)^2\right)
    \ge
    \CVaR_\alpha^{\mu_\Delta,\Alg}(\ell_\Delta).
    \label{eq:ldp-mean-monotonicity}
\end{equation}

\paragraph*{2) Lower bound the reference excess loss}
Let $Q$ be the released transcript distribution when the true parameter
is $-
\Delta$. For $z\in\R$, define
\begin{equation}
    \ell_+(z):=\min\{(z-\Delta)^2,\Delta^2\},
    \qquad
    \ell_-(z):=\min\{(z+\Delta)^2,\Delta^2\}.
\end{equation}
Because the two parameter values are separated by $2\Delta$,
\begin{equation}
    \max\{\ell_+(z),\ell_-(z)\}=\Delta^2
\end{equation}
for every $z$. Consequently, for every
$s\in[0,\Delta^2]$,
\begin{equation}
    (\ell_+(z)-s)_++(\ell_-(z)-s)_+
    \ge\Delta^2-s.
    \label{eq:ldp-mean-separation}
\end{equation}

The fixed prior is uniform, so
\begin{align}
    b_s(Q)
    &
    =
    \frac{1}{2\Delta^2}
    \E_Q
    \left[
        (\ell_+(\widehat\theta)-s)_+
        +(\ell_-(\widehat\theta)-s)_+
    \right]
    \nonumber\\
    &\ge
    \frac{\Delta^2-s}{2\Delta^2}
    =
    \frac12\left(1-\frac{s}{\Delta^2}\right).
\end{align}
Hence
\begin{equation}
    0\le s\le\frac{\Delta^2}{2}
    \quad\Longrightarrow\quad
    b_s(Q)\ge\frac14.
    \label{eq:ldp-mean-bs-quarter}
\end{equation}

\paragraph*{3) Compute the total variation distance of the raw laws}
For every user $i$, the two raw distributions are
\begin{equation}
    P_{+,\Delta}=\mathcal N(+\Delta,\sigma^2),
    \qquad
    P_{-,\Delta}=\mathcal N(-\Delta,\sigma^2).
\end{equation}
Their KL divergence in either direction is
\begin{equation}
    \KL(P_{+,\Delta}\|P_{-,\Delta})
    =
    \KL(P_{-,\Delta}\|P_{+,\Delta})
    =
    \frac{2\Delta^2}{\sigma^2}.
\end{equation}
Pinsker's inequality therefore gives
\begin{equation}
    \TV(P_{+,\Delta},P_{-,\Delta})^2
    \le
    \frac{\Delta^2}{\sigma^2}.
    \label{eq:ldp-mean-tv}
\end{equation}

\paragraph*{4) Bound the released transcript divergence}
Let $\widetilde P_+$ and $\widetilde P_-$ denote the released transcript
laws under $+\Delta$ and $-\Delta$, respectively. Since the problem is
passive, Proposition~\ref{prop:ldp-transcript} reduces to the sum of the
one-user conditional divergences. By \eqref{eq:ldp-mean-tv},
\begin{align}
    \KL(\widetilde P_-\|\widetilde P_+)
    &\le
    \frac{\Delta^2}{\sigma^2}
    \sum_{i=1}^n\kappa(\varepsilon_i).
\end{align}
Define
\begin{equation}
    \Lambda_n:=\sum_{i=1}^n\kappa(\varepsilon_i).
\end{equation}
For reverse KL,
\begin{equation}
    B_f
    =
    \frac12\KL(\widetilde P_-\|\widetilde P_-)
    +
    \frac12\KL(\widetilde P_-\|\widetilde P_+)
    =
    \frac12\KL(\widetilde P_-\|\widetilde P_+).
\end{equation}
Thus
\begin{equation}
    B_f
    \le
    \frac{\Delta^2\Lambda_n}{2\sigma^2}.
    \label{eq:appendix-ldp-mean-B}
\end{equation}
If $\varepsilon_i\le1$ for all $i$, then
\begin{equation}
    \Lambda_n
    \le
    16\sum_{i=1}^n\varepsilon_i^2,
\end{equation}
so
\begin{equation}
    B_f
    \le
    \frac{8\Delta^2}{\sigma^2}
    \sum_{i=1}^n\varepsilon_i^2.
    \label{eq:appendix-ldp-mean-B-simple}
\end{equation}

\paragraph*{5) Bound the Bernoulli inverse}
For $s\le\Delta^2/2$, 
\eqref{eq:ldp-mean-bs-quarter} gives $b_s(Q)\ge1/4$. Hence Lemma~\ref{lem:reverse-kl} gives
\begin{align}
    a_f^-(B_f;b_s(Q))
    &\ge
    \frac1{16}e^{-4B_f}
    \nonumber\\
    &\ge
    \frac1{16}
    \exp\left(-\frac{2\Delta^2\Lambda_n}{\sigma^2}\right).
    \label{eq:ldp-mean-inverse}
\end{align}

\paragraph*{6) Obtain the fixed-prior CVaR bound}
Let $\delta:=1-\alpha$. Corollary~\ref{cor:general-cvar} gives
\begin{equation}
    \CVaR_\alpha^{\mu_\Delta,\Alg}(\ell_\Delta)
    \ge
    \inf_{0\le s\le\Delta^2}
    \left\{
        s+\frac{\Delta^2}{\delta}
        a_f^-(B_f;b_s(Q))
    \right\}.
\end{equation}
If $s\ge\Delta^2/2$, the expression is at least $\Delta^2/2$. If
$0\le s<\Delta^2/2$, then \eqref{eq:ldp-mean-inverse} and $s\ge0$ give
\begin{equation}
    s+\frac{\Delta^2}{\delta}a_f^-(B_f;b_s(Q))
    \ge
    \frac{\Delta^2}{16\delta}
    \exp\left(-\frac{2\Delta^2\Lambda_n}{\sigma^2}\right).
\end{equation}
Therefore, for every $s\in[0,\Delta^2]$,
\begin{equation}
    s+\frac{\Delta^2}{\delta}a_f^-(B_f;b_s(Q))
    \ge
    \Delta^2
    \min\left\{
        \frac12,
        \frac{1}{16(1-\alpha)}
        \exp\left(-\frac{2\Delta^2\Lambda_n}{\sigma^2}\right)
    \right\}.
\end{equation}
Taking the infimum over $s$, using \eqref{eq:ldp-mean-monotonicity}, and then
using the arbitrariness of $\Alg$ gives the fixed-prior bound
\eqref{eq:gaussian-ldp-mean-fixed} for every $\Delta\in(0,R]$.

\paragraph*{7) Choose a hard prior from the family}
Assume $\delta<1/8$. The two terms in the minimum are equal when
\begin{equation}
    \frac12
    =
    \frac{1}{16\delta}
    \exp\left(-\frac{2\Delta^2\Lambda_n}{\sigma^2}\right).
\end{equation}
Hence
\begin{equation}
    \Delta_0^2
    =
    \frac{\sigma^2}{2\Lambda_n}
    \log\frac{1}{8\delta}.
\end{equation}
Choose
\begin{equation}
    \Delta_{n,\alpha}^2
    :=
    \min\{R^2,\Delta_0^2\}.
\end{equation}
Because the fixed-prior inequality holds for every admissible $\Delta$,
it applies to this choice. If $R^2\le\Delta_0^2$, the exponential term
is at least $1/2$ and the lower bound is $R^2/2$. If $R^2>\Delta_0^2$,
the two terms are equal to $1/2$ and the lower bound is
$\Delta_0^2/2$. Therefore
\begin{equation}
    \inf_{\Alg\in\mathcal D_{\boldsymbol\varepsilon}^{\mathrm{LDP}}}
    \CVaR_\alpha^{\mu_{\Delta_{n,\alpha}},\Alg}
    \left((\widehat\theta-\theta)^2\right)
    \ge
    \frac12
    \min\left\{
        R^2,
        \frac{\sigma^2}{2\Lambda_n}
        \log\frac{1}{8(1-\alpha)}
    \right\}.
\end{equation}
\end{proof}

\subsection{Gaussian K-armed bandits under LDP}

\begin{proof}[Proof of Theorem~\ref{thm:gaussian-ldp-bandit}]
Fix $K\ge2$, $T\ge1$, $\sigma>0$, $\alpha\in(0,1)$, and an arbitrary
gap $g\in(0,1]$. Fix the prior
\begin{equation}
    \mu_g^K=\frac1K\sum_{j=1}^K\delta_{M_j^g}
\end{equation}
and an arbitrary procedure
\begin{equation}
    \Alg\in\mathcal D_{T,\boldsymbol\varepsilon}^{\mathrm{LDP}}.
\end{equation}
Let $M_0$ be the null model in which every arm has reward law
$\mathcal N(0,\sigma^2)$. Let $Q$ denote its released transcript law,
and let $\mathbb P_j$ denote the released transcript law under $M_j^g$.

\paragraph*{1) Define the loss}
Under $M_j^g$,
\begin{equation}
    R_T(M_j^g)=g(T-N_j(T)),
\end{equation}
so
\begin{equation}
    0\le R_T(M_j^g)\le Tg.
\end{equation}
Thus $L_{\max}=Tg$.

\paragraph*{2) Lower bound the reference excess loss}
For every realized action sequence,
\begin{equation}
    \sum_{j=1}^K N_j(T)=T.
\end{equation}
Hence
\begin{align}
    \frac1K\sum_{j=1}^K R_T(M_j^g)
    &
    =
    \frac{K-1}{K}gT.
\end{align}
This identity is pointwise and therefore does not depend on the reference
transcript law $Q$.

For $s\in[0,Tg]$, Jensen's inequality gives
\begin{align}
    b_s(Q)
    &\ge
    \frac1{Tg}
    \left(
        \frac{K-1}{K}gT-s
    \right)_+\nonumber\\
    &=
    \left(
        \frac{K-1}{K}-\frac{s}{Tg}
    \right)_+.
\end{align}
Therefore
\begin{equation}
    0\le s\le\frac{K-1}{2K}Tg
    \quad\Longrightarrow\quad
    b_s(Q)\ge\frac{K-1}{2K}\ge\frac14.
\end{equation}

\paragraph*{3) Bound the raw Gaussian total variation distance}
Fix an arm $j$. Under $M_0$ and $M_j^g$, the reward distributions coincide
for every arm $a\ne j$. When $a=j$, they are
\begin{equation}
    \mathcal N(0,\sigma^2)
    \qquad\text{and}\qquad
    \mathcal N(g,\sigma^2).
\end{equation}
Pinsker's inequality and the equal-variance Gaussian KL formula give
\begin{align}
    \TV^2
    \left(
        \mathcal N(0,\sigma^2),
        \mathcal N(g,\sigma^2)
    \right)
    &\le
    \frac12
    \KL\left(
        \mathcal N(0,\sigma^2)
        \middle\|
        \mathcal N(g,\sigma^2)
    \right)
    \nonumber\\
    &=
    \frac{g^2}{4\sigma^2}.
    \label{eq:gaussian-ldp-bandit-tv-proof}
\end{align}

\paragraph*{4) Bound the released transcript divergence}
Fix $j\in[K]$ and let
\begin{equation}
    H_{t-1}^{\mathrm{LDP}}
    :=
    (A_1,Z_1,\ldots,A_{t-1},Z_{t-1}).
\end{equation}
The conditional raw reward laws under $M_0$ and $M_j^g$ differ only when
$A_t=j$. Therefore Proposition~\ref{prop:ldp-transcript} and
\eqref{eq:gaussian-ldp-bandit-tv-proof} give
\begin{align}
    \KL(Q\|\mathbb P_j)
    &\le
    \frac{g^2}{4\sigma^2}
    \E_Q
    \left[
        \sum_{t=1}^T
        \kappa(\varepsilon_t)\1\{A_t=j\}
    \right].
    \label{eq:gaussian-ldp-bandit-kl-j}
\end{align}

For reverse KL,
\begin{align}
    B_f
    &:=
    \frac1K\sum_{j=1}^K\KL(Q\|\mathbb P_j)
    \nonumber\\
    &\le
    \frac{g^2}{4K\sigma^2}
    \E_Q
    \left[
        \sum_{t=1}^T
        \kappa(\varepsilon_t)
        \sum_{j=1}^K\1\{A_t=j\}
    \right]
    \nonumber\\
    &=
    \frac{g^2}{4K\sigma^2}
    \sum_{t=1}^T\kappa(\varepsilon_t).
    \label{eq:appendix-ldp-bandit-B}
\end{align}
Define
\begin{equation}
    \Lambda_T
    :=
    \sum_{t=1}^T\kappa(\varepsilon_t).
\end{equation}
Then
\begin{equation}
    B_f
    \le
    \frac{g^2\Lambda_T}{4K\sigma^2}.
    \label{eq:gaussian-ldp-bandit-B-final}
\end{equation}
If $\varepsilon_t\le1$ for every $t$, then
\begin{equation}
    \Lambda_T
    \le
    16\sum_{t=1}^T\varepsilon_t^2,
\end{equation}
so
\begin{equation}
    B_f
    \le
    \frac{4g^2}{K\sigma^2}
    \sum_{t=1}^T\varepsilon_t^2.
\end{equation}

\paragraph*{5) Bound the Bernoulli inverse}
For
\begin{equation}
    0\le s\le\frac{K-1}{2K}Tg,
\end{equation}
we have $b_s(Q)\ge1/4$. Hence Lemma~\ref{lem:reverse-kl} gives
\begin{align}
    a_f^-(B_f;b_s(Q))
    &\ge
    \frac1{16}e^{-4B_f}
    \nonumber\\
    &\ge
    \frac1{16}
    \exp\left(-\frac{g^2\Lambda_T}{K\sigma^2}\right).
    \label{eq:gaussian-ldp-bandit-inverse-proof}
\end{align}

\paragraph*{6) Obtain the fixed-prior CVaR bound}
Set $\delta:=1-\alpha$. Corollary~\ref{cor:general-cvar} gives
\begin{equation}
    \CVaR_\alpha^{\mu_g^K,\Alg}(R_T)
    \ge
    \inf_{0\le s\le Tg}
    \left\{
        s+\frac{Tg}{\delta}
        a_f^-(B_f;b_s(Q))
    \right\}.
\end{equation}
If
\begin{equation}
    s\ge\frac{K-1}{2K}Tg,
\end{equation}
then the expression is at least $(K-1)Tg/(2K)$. If
$0\le s<(K-1)Tg/(2K)$, then \eqref{eq:gaussian-ldp-bandit-inverse-proof}
gives
\begin{equation}
    s+\frac{Tg}{\delta}a_f^-(B_f;b_s(Q))
    \ge
    \frac{Tg}{16\delta}
    \exp\left(-\frac{g^2\Lambda_T}{K\sigma^2}\right).
\end{equation}
Therefore
\begin{equation}
    \CVaR_\alpha^{\mu_g^K,\Alg}(R_T)
    \ge
    Tg
    \min\left\{
        \frac{K-1}{2K},
        \frac{1}{16(1-\alpha)}
        \exp\left(-\frac{g^2\Lambda_T}{K\sigma^2}\right)
    \right\}.
\end{equation}
Since $\Alg$ was arbitrary, this proves the fixed-prior bound
\eqref{eq:gaussian-ldp-bandit-fixed}.

\paragraph*{7) Choose a hard prior from the gap family}
Let
\begin{equation}
    c_K:=\frac{K-1}{2K},
    \qquad
    \delta:=1-\alpha.
\end{equation}
Assume
\begin{equation}
    \delta<\frac{K}{8(K-1)}.
\end{equation}
The two terms in the minimum in \eqref{eq:gaussian-ldp-bandit-fixed} are equal
when
\begin{equation}
    c_K
    =
    \frac{1}{16\delta}
    \exp\left(-\frac{g^2\Lambda_T}{K\sigma^2}\right).
\end{equation}
Solving for $g^2$ gives
\begin{equation}
    g_0^2
    =
    \frac{K\sigma^2}{\Lambda_T}
    \log\frac{K}{8(K-1)\delta}.
\end{equation}
Choose
\begin{equation}
    g_{T,\alpha}:=\min\{1,g_0\}.
\end{equation}
If $g_0\le1$, the two branches are equal at $g=g_0$. If $g_0>1$, then
monotonicity of the exponential branch implies that at $g=1$ the second
branch is at least $c_K$. Hence, in both cases,
\begin{equation}
\begin{aligned}
    &\inf_{\Alg\in\mathcal D_{T,\boldsymbol\varepsilon}^{\mathrm{LDP}}}
    \CVaR_\alpha^{\mu_{g_{T,\alpha}}^K,\Alg}(R_T)\\
    &\qquad\ge
    \frac{K-1}{2K}T
    \min\left\{
        1,
        \frac{\sigma\sqrt K}{\sqrt{\Lambda_T}}
        \sqrt{\log\frac{K}{8(K-1)(1-\alpha)}}
    \right\}.
\end{aligned}
\end{equation}
If $0<\varepsilon_t=\varepsilon\le1$ for every $t$, then
$\Lambda_T\le16T\varepsilon^2$ and therefore, for fixed $K$,
\begin{equation}
    \inf_{\Alg\in\mathcal D_{T,\varepsilon}^{\mathrm{LDP}}}
    \CVaR_\alpha^{\mu_{g_{T,\alpha}}^K,\Alg}(R_T)
    \gtrsim
    \min\left\{
        T,
        \frac{\sigma}{\varepsilon}
        \sqrt{KT\log\frac1{1-\alpha}}
    \right\},
\end{equation}
up to constants depending only on $K$.
\end{proof}

\subsection{Bounded mean estimation for comparison with heterogeneous LDP}

\begin{proof}[Proof of Theorem~\ref{thm:bounded-ldp-mean}]
Fix $n\ge1$, $\alpha\in(0,1)$, and an arbitrary
\begin{equation}
    \Delta\in(0,1].
\end{equation}
Consider the two bounded distributions
\begin{align}
    P_{+,\Delta}(X=1)&=\frac{1+\Delta}{2},
    &P_{+,\Delta}(X=-1)&=\frac{1-\Delta}{2},\\
    P_{-,\Delta}(X=1)&=\frac{1-\Delta}{2},
    &P_{-,\Delta}(X=-1)&=\frac{1+\Delta}{2}.
\end{align}
Their means are $+\Delta$ and $-\Delta$. Fix the prior
\begin{equation}
    \mu_\Delta^{\mathrm{bd}}
    =
    \frac12\delta_{P_{+,\Delta}}
    +
    \frac12\delta_{P_{-,\Delta}}.
\end{equation}
Fix an arbitrary heterogeneous LDP procedure.

\paragraph*{1) Bounded loss and reference excess loss}
Use
\begin{equation}
    \ell_\Delta
    =
    \min\left\{
        (\widehat\theta-\theta)^2,
        \Delta^2
    \right\}.
\end{equation}
Let $Q$ denote the released transcript distribution under
$P_{-,\Delta}$.
The same pointwise separation as in the Gaussian mean-estimation proof gives
\begin{equation}
    b_s(Q)
    \ge
    \frac12
    \left(1-\frac{s}{\Delta^2}\right),
\end{equation}
so $b_s(Q)\ge1/4$ for
$0\le s\le\Delta^2/2$.

\paragraph*{2) Compute the raw total variation distance}
Since the distributions are supported on $\{-1,+1\}$,
\begin{align}
    \TV(P_{+,\Delta},P_{-,\Delta})
    &
    =
    \frac12
    \sum_{x\in\{-1,+1\}}
    \left|P_{+,\Delta}(x)-P_{-,\Delta}(x)\right|
    \nonumber\\
    &=
    \frac12(\Delta+\Delta)
    =\Delta.
    \label{eq:bounded-mean-tv-proof}
\end{align}

\paragraph*{3) Bound the released divergence}
For each user $i$, the LDP contraction inequality gives
\begin{equation}
    \KL(\widetilde P_{-,i}\|\widetilde P_{+,i})
    \le
    \kappa(\varepsilon_i)\Delta^2.
\end{equation}
Applying the chain rule across the independent users and then using the
reverse-KL definition of $B_f$ gives
\begin{equation}
    B_f
    \le
    \frac{\Delta^2}{2}
    \sum_{i=1}^n\kappa(\varepsilon_i).
    \label{eq:bounded-mean-B-proof}
\end{equation}
When $\varepsilon_i\le1$,
\begin{equation}
    B_f
    \le
    8\Delta^2\sum_{i=1}^n\varepsilon_i^2.
\end{equation}

\paragraph*{4) Apply the Bernoulli inverse and the CVaR bound}
For $s\le\Delta^2/2$, Lemma~\ref{lem:reverse-kl} gives
\begin{equation}
    a_f^-(B_f;b_s(Q))
    \ge
    \frac1{16}
    \exp\left(-32\Delta^2\sum_{i=1}^n\varepsilon_i^2\right).
\end{equation}
The same two-range argument as above therefore gives
\begin{equation}
    \inf_{\Alg\in\mathcal D_{\boldsymbol\varepsilon}^{\mathrm{LDP}}}
    \CVaR_\alpha^{\mu_\Delta^{\mathrm{bd}},\Alg}
    \left((\widehat\theta-\theta)^2\right)
    \ge
    \Delta^2
    \min\left\{
        \frac12,
        \frac{1}{16(1-\alpha)}
        \exp\left(-32\Delta^2\sum_{i=1}^n\varepsilon_i^2\right)
    \right\}.
\end{equation}
Finally, choosing
\begin{equation}
    \Delta^2
    =
    \min\left\{
        1,
        \frac{1}{32\sum_{i=1}^n\varepsilon_i^2}
        \log\frac1{8(1-\alpha)}
    \right\}
\end{equation}
for $1-\alpha<1/8$ gives the stated comparison result.
\end{proof}